\newcommand{\R}{\mathbb R}
\newcommand{\ignore}[1]{}
\newcommand{\sols}{\mathcal{S}}
\newcommand{\dist}{\mathcal{D}}
\newcommand{\grps}{\mathcal{G}}
\newcommand{\comm}{\mathcal{C}}
\newcommand{\tA}[1]{#1-tradeoff leximax}
\newcommand{\rA}[1]{#1-recursive leximax}
\newcommand{\sig}[1]{#1-significant recursive leximax}
\title{Leximax Approximations and Representative Cohort Selection} %
\author{Monika {Henzinger}}{University of Vienna, Vienna Austria}{mhenz@cs.stanford.edu}{https://orcid.org/0000-0002-5008-6530}{
\flag{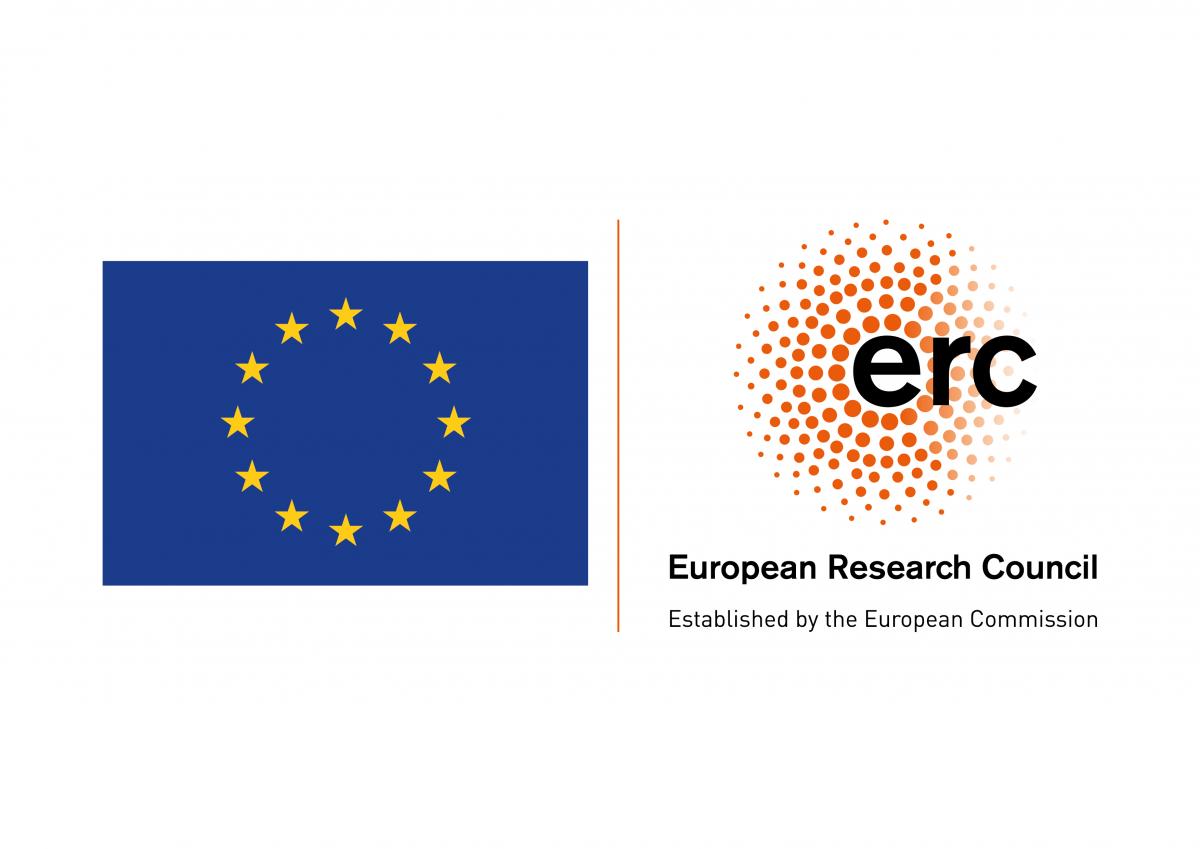}
This work was done in part as Stanford University Distinguished Visiting Austrian Chair. This project has received funding from the
European Research Council (ERC) under the European Union's Horizon 2020
research and innovation programme (Grant agreement No.\ 101019564
``The Design of Modern Fully Dynamic Data Structures (MoDynStruct)''
and from the
Austrian Science Fund (FWF) project ``Fast Algorithms for a Reactive Network
Layer (ReactNet)'', P~33775-N, with additional funding from the \textit{netidee SCIENCE
Stiftung}, 2020--2024.}%
\author{Charlotte {Peale}}{Stanford University, United States}{cpeale@stanford.edu}{https://orcid.org/
0000-0002-9959-857X}{Supported by the Simons Foundation Collaboration on the Theory of Algorithmic Fairness.}%
\author{Omer {Reingold}}{Stanford University, United States}{reingold@cs.stanford.edu}{https://orcid.org/
0000-0003-4997-1716}{Supported by the Simons Foundation Collaboration on the Theory of Algorithmic Fairness, the Sloan Foundation Grant 2020-13941 and the Simons Foundation investigators award 689988.}%
\author{Judy Hanwen {Shen}}{Stanford University, United States}{jhshen@cs.stanford.edu}{https://orcid.org/0000-0002-7864-5242}{Supported by the Simons Foundation Collaboration on the Theory of Algorithmic Fairness.}%
\authorrunning{M. Henzinger et al.} %
\keywords{fairness, cohort selection, leximin, maxmin} %
\newcommand{\remove}[1]{}
\begin{document}

\maketitle

\begin{abstract}

Finding a representative cohort from a broad pool of candidates is a goal that arises in many contexts such as choosing governing committees and consumer panels. While there are many ways to define the degree to which a cohort represents a population, a very appealing solution concept is lexicographic maximality (leximax) which offers a natural (pareto-optimal like) interpretation that the utility of no population can be increased without decreasing the utility of a population that is already worse off. However, finding a leximax solution can be highly dependent on small variations in the utility of certain groups. In this work, we explore new notions of approximate leximax solutions with three distinct motivations: better algorithmic efficiency, exploiting significant utility improvements, and robustness to noise. Among other definitional contributions, we give a new notion of an approximate leximax that satisfies a  similarly appealing semantic interpretation and relate it to algorithmically-feasible approximate leximax notions. When group utilities are linear over cohort candidates, we give an efficient polynomial-time algorithm for finding a leximax distribution over cohort candidates in the exact as well as in the approximate setting. Furthermore, we show that finding an integer solution to leximax cohort selection with linear utilities is NP-Hard.

\end{abstract}

\section{Introduction}\label{sec:intro}

In many fairness-related settings, we seek to select an outcome that does not disproportionately harm any key subgroup. Speaking in terms of group utilities, a fair solution would ideally provide every key subgroup with high utility. Unfortunately, such a goal may be impossible to achieve if the utilities derived by subgroups from any potential solutions are in opposition. Moreover, other goals such as seeking to equalize utilities across groups may artificially constrain the utility of certain groups in order to match some group with uniformly low utility.

The classic \emph{maximin} objective, which seeks to output solutions that maximize the utility of the worst-off group, has been widely studied as a goal that can circumvent these potential pitfalls by seeking to achieve the best possible outcome for the worst-off group. This results in a set of solutions that optimize the outcome for the worst-off group, but may still vary quite a bit with respect to the second-worst-off group, third-worst-off group, etc. \emph{Lexicographically maximal} solutions strengthen the maximin objective by requiring that the utility of the second-worst-off group be maximized subject to the worst-off-group achieving its maximin value, the third-worst-off group be maximized subject to the worst-off and second-worst-off values, and so on. This goal intuitively tells us that a lexicographically maximal solution gives the best-possible utility guarantee we can give for each group without harming another group. 

Lexicographic maximality (which we refer to as \emph{leximax}, but is sometimes referred to in the literature as \emph{leximin}) has been widely studied in the context of allocations \cite{freeman2019, kleinberg1999fairness, kurokawaLeximinAllocationsReal2018}. Recently, Diana, Gill, Globus-Harris, Kearns, Roth, and Sharifi-Malvajerdi \cite{dianaLexicographicallyFairLearning2021} explored applying the objective to the contemporary fairness context of loss minimization. In this paper, motivated by the goal of selecting a representative cohort from a group of candidates, we generalize the approach of~\cite{dianaLexicographicallyFairLearning2021} to the goal of selecting a solution that achieves lexicographically maximal utilities for a set of key subgroups.

Our contributions fall into two main categories: definitional, where we explore useful variants of the leximax objective and their relations in the general setting of selecting a leximax solution from a set of potential solutions, and algorithmic, in which we investigate how to efficiently find exact leximax solutions as well as different variants in the specific context of selecting representative cohorts. We provide an overview of definitional contributions in Section~\ref{sec:intro-approx}, followed by an overview of the cohort selection context and resulting algorithms in Section~\ref{sec:intro-algs}.

\subsection{Approximations of Lexicographically Maximal Solutions}\label{sec:intro-approx}

Diana et al.~\cite{dianaLexicographicallyFairLearning2021} define an approximate notion of lexicographic maximality for which they construct oracle-efficient algorithms. Their notion is influenced by an algorithmic approach to calculating leximax solutions in that it assumes the maximal values of the worst-off group, second-worst-off group, etc. are calculated recursively based on whatever estimates came before. The definition assumes some small amount of error when calculating the maximin utility value, and then considers how this error would propagate to the second-worst-off-group's maximum value, then considers how additional errors around the second-worst-off-group's maximum value together with errors from the worst-off group maximum value might propagate to the third-worst-off group, and so on. 

One of the appealing aspects of leximax solutions is that they offer a simple semantic interpretation that explains the sort of fairness guarantees such solutions provide: given a leximax solution, any alternative solution that improves the utility of some group must also decrease the utility of some worse-off group (Proposition~\ref{lem:sem-lex}). While the approximation notion presented by~\cite{dianaLexicographicallyFairLearning2021} is very natural, they also show that such approximate solutions may greatly diverge from exact solutions (see Example~\ref{ex:sensitivity} for details), meaning that they may also diverge from this appealing semantic interpretation. 

Ideally, we'd like a well-defined notion of approximation that extends the semantic interpretation of leximax and relates to the algorithmically achievable notion presented by~\cite{dianaLexicographicallyFairLearning2021}. However, we find that such a definition is somewhat difficult to pin down. \emph{Many natural relaxations of the semantic definition result in notions of approximations where either no solutions are guaranteed to satisfy the notion or the notions themselves may not imply a  meaningful fairness guarantee that is analogue to that offered by leximax solutions. }
Developing a meaningful notion of approximation is exactly the challenge that this paper addresses.

We provide a relaxation of the semantic definition that we term \emph{\tA{$\epsilon$}} (Definition~\ref{def:tradeoff}) that is always guaranteed to exist, and while it is not equivalent to the notion presented in~\cite{dianaLexicographicallyFairLearning2021}, in Theorem~\ref{thm:trade-cg-equiv}, we show that it is equivalent to a stronger variant of their definition that we call\emph{ \rA{$\epsilon$}} (Definition~\ref{def:cg-recursive-approx}). The algorithms of~\cite{dianaLexicographicallyFairLearning2021} have the potential to slightly mis-estimate the maxmin values for different groups, and therefore are only guaranteed to output approximate leximax solutions. The type of mis-estimations that may arise are actually more constrained than the full class of errors their weaker notion of approximation allows for. In particular, solutions outputted by their algorithms actually satisfy our stronger notion of \rA{$\epsilon$}. 

Past explorations of lexicographic maximality have mostly concentrated on finding \emph{exact} leximax solutions. In the design of algorithms, approximations are usually viewed as alternative solutions that are ``almost as good'' as the exact solution and that are computed in settings where it is difficult to efficiently find exact solutions.
In this paper we suggest that in some cases, we may prefer to consider an approximate notion of lexicographic maximality rather than its exact counterpart. In particular, exact leximax solutions may be highly dependent on small variations in the utility of less-well-off groups. For example, a solution where all groups receive 0.01 utility would be preferred by the exact leximax objective over a solution where one group receives 0 utility and all others receive a utility of 1, even though this second solution gets much higher utility for the majority of groups while only decreasing the utility of a single group by a tiny amount. We explore well-defined ways where approximation can benefit stakeholders and suggest a notion of approximation that is stronger than the \rA{$\epsilon$} notion mentioned above that we term \emph{\sig{$\epsilon$} approximation} (Definition~\ref{def:sig-recursive-approx}) that identifies solutions that ignore tiny variations in utility and identifies only solutions that are leximax due to significant increases in utility. In Theorem~\ref{lem:sig-prop}, we give a more formal characterization of the benefits drawn from considering \sig{$\epsilon$} solutions rather than just any \rA{$\epsilon$} solution. 

A third motivator for our study of leximax approximations is how robust leximax solutions may be to small amounts of noise in the estimates of group utility. We show that when calculated in a noisy setting, our relaxed semantic notion (\tA{$\epsilon$}) is not guaranteed to still be \tA{$\epsilon$}, however it is guaranteed to satisfy the weaker notion of approximation defined in~\cite{dianaLexicographicallyFairLearning2021}. On the other hand, in Lemma~\ref{lem:noisy-trade} we show that we can define a stronger variant of the semantic notion that guarantees a solution will be \tA{$\epsilon$} in the noisy setting, but it has the disadvantage that such solutions may not always exist. We also examine noise in the context of \sig{$\epsilon$} solutions, and show that when such solutions are calculated in the presence of noise, they are somewhat robust to noise as they imply a slightly weakened variant of $\epsilon$-significance (Lemma~\ref{lem:sig-w-noise}). 

Figure~\ref{fig:class-diag} summarizes the various notions of approximate lexicographical maximality and how they relate to one another. All of our approximate notions are defined with respect to an arbitrary class of solutions from which we'd like to pick a leximax solution. This allows our new definitions to be applied in the deterministic setting, where each solution would represent a particular cohort, or a randomized setting, where each solution corresponds to a distribution over cohorts and utilities are given in expectation.
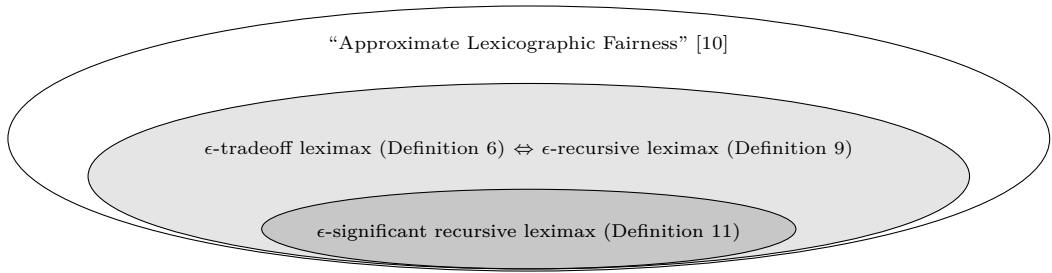
\begin{figure}
    \centering
    \begin{tikzpicture}
        \draw[black] (3,3) ellipse (195pt and 50pt) node[anchor=north, yshift=1.5cm]{\scriptsize``Approximate Lexicographic Fairness''~\cite{dianaLexicographicallyFairLearning2021}};
        \draw[fill=gray, fill opacity=0.2] (3,2.5) ellipse(165pt and 35pt);
        \draw[fill=darkgray, fill opacity=0.2] (3,1.8) ellipse(100pt and 15pt) node[anchor=north, yshift=0.2cm]{\scriptsize \sig{$\epsilon$}  (Definition~\ref{def:sig-recursive-approx})};
        \draw (3, 2.5) node[anchor=north, yshift=0.6cm]{\scriptsize \tA{$\epsilon$} (Definition~\ref{def:tradeoff}) $\Leftrightarrow$ \rA{$\epsilon$} (Definition~\ref{def:cg-recursive-approx})};
        \draw (3,1.8) node[anchor=north, yshift=0.2cm]{\scriptsize \sig{$\epsilon$}  (Definition~\ref{def:sig-recursive-approx})};
    \end{tikzpicture}
    \caption{Relations between the different notions of leximax approximation discussed here and in~\cite{dianaLexicographicallyFairLearning2021}. }
    \label{fig:class-diag}
\end{figure}

\subsection{Algorithms for Leximax Cohort Selection}\label{sec:intro-algs}
In data selection, recruiting, and civic participation settings where a representative cohort is desired, the goal of representation is juxtaposed with the constraint of selecting a small representative set. There can be tension between selecting a cohort small enough for the resources available but large enough to represent as much of the population as possible. A lexicographically maximal solution is particularly salient in a representative cohort problem because it guarantees inclusion for the worst-off-groups while optimizing for the utility of all groups. We consider a model where how well each group or individual in the population is represented by a cohort candidate is given by a utility function. While there are many different ways a cohort or committee in power might make decisions or influence outcomes, we consider a linear setting where the utility a group derives from a cohort is the sum of utilities derived from each member of the cohort. Approximate notions of lexicographical maximality are of particular interest in this setting since estimating utilities that describe representativeness is difficult and might be noisy in practice. 

Diana et al. \cite{dianaLexicographicallyFairLearning2021} give a convex formulation of approximate lexicographical fairness and an oracle-efficient algorithm to solve general leximax convex programs. For our cohort selection setting specifically, we leverage the linearity across decision variables to find a polynomial time algorithm  (Algorithm \ref{alg:lp_cand}) that can calculate both exact leximax solutions as well as the two approximate variants we consider, \tA{$\epsilon$} and \sig{$\epsilon$} (with no external oracle needed for the calculation). 

The linearity of utilities across cohort members and the recursive definition of leximax gives us a sequence of linear programs where the number of variables is linear in the size of the candidate pool and the number of constraints is exponential in the number of groups. In each $m$-th linear program, we maximize the sum of utilities of all sized-$m$ groups which gives us an exponential number of constraints; rendering the linear program too big to solve via generic LP solvers. We circumvent this difficulty by creating a separation oracle (Algorithm \ref{alg:sep-oracle}) which tests the sum of utilities of the $m$ worst off groups efficiently, giving us a polynomial time algorithm overall (Lemma \ref{lem:poly-lp}). We can use the same approach to efficiently find  \tA{$\epsilon$} or \sig{$\epsilon$} solutions by modifying the lower bound constraints on the sum of group utilities. 

The output of our algorithm allows a randomized approach for selecting a cohort of expected size $k$ that guarantees leximax utilities in expectation. We can also round our algorithm output to a solution of size exactly $k$ where the expected utility across groups is leximax\footnote{For further rounding details, see discussion in Section \ref{sec:rounding}}. We focus on this distributional setting for our algorithms for a few key reasons:

\begin{itemize}
    \item \textbf{Tractability.}
    If we wanted to instead find a deterministic cohort of exactly size $k$ by finding a lexicographically maximal integer solution, the problem becomes hard. By showing that the problem of finding the exact integer lexicographically maximal cohort  solves the NP-hard problem of Minimum Hitting Set, we show that finding a solution as well as approximating the number of groups with non-minimum utility within a factor of $(1 - 1/e) + o(1)$ is NP-Hard (Lemma \ref{lem:hardness}).
    
    \item \textbf{Fair Arbitration between Solutions.} It is very possible that two lexicographically maximal deterministic solutions may provide wildly different utility values for a particular group. As an example, consider choosing between a cohort that provides maximum utility to Group A, but zero utility to Group B, and another cohort that provides zero utility to Group A but maximum utility to Group B. Both cohorts are a lexicographic maximum, however selecting a deterministic solution requires us to decide whether the solution should favor Group A or B. A distributional approach gets rid of this difficult decision because the randomized approach itself guarantees that we are providing both A and B a fair chance at high utility. %
\end{itemize}

There are many different potential approaches to randomly selecting a cohort in the distributional setting. We choose to use a randomized approach to selection that includes or excludes each potential cohort member independently with probability outputted by the algorithm. Such an approach offers the following benefits:

\begin{itemize}    
    \item \textbf{Simple Sampling Procedure.} Rather than outputting an arbitrary and potentially complicated distribution over cohorts that is difficult to sample from, the output of our algorithm is a single vector of marginal selection probabilities for each potential candidate. Our approach still results in a cohort with expected size $k$, but provides an easy way to sample cohorts, and as discussed in the final bullet point, gives better guarantees about the utility groups can expect to receive in practice. We also describe a rounding approach that results in cohorts of size exactly $k$ that are still leximax in expectation.
    
    \item \textbf{Better Concentration Guarantees for Some Natural Settings.} While a distributional leximin solution may give groups better utility guarantees in expectation, it comes with the caveat that individual runs of the randomized solution may still result in cohorts where groups receive utility that is far below their expected utility. In an extreme case, a distributional solution that guarantees all groups 0.5 utility might be achieved by choosing uniformly between solutions that provide maximum and zero utility. When the size of the cohort is large enough, our approach to randomized choice guarantees that groups receive utility near their expectation with high probability because we consider each cohort member independently, rather than outputting an arbitrary joint distribution over potential cohort members (Lemma~\ref{lem:concentration}). 

\end{itemize}

\subsection{Our Contributions}\label{sec:contributions}
To summarize, we provide the following contributions:

\begin{enumerate}
    \item Define a new semantic notion of leximax approximation that is always guaranteed to exist and show that it is equivalent to an algorithmically-inspired notion of approximation that is stronger but related to the one defined in~\cite{dianaLexicographicallyFairLearning2021}.
    
    \item Investigate stricter notions of approximation that identify significantly leximax solutions that can be achieved by ignoring small variations in utility. 
    
    \item Explore how our new notions of approximation behave in settings where the group utilities may be reported with some small amount of additive noise.
    
    \item Provide polynomial time algorithms for computing exact and approximate leximax distributions over cohorts with linear utility functions.
    
    \item Show that the alternative goal of computing deterministic cohorts in our setting is NP-hard, and moreover approximating the number of groups with non-minimum utility is also NP-hard.
\end{enumerate}

\subsection{Related Work}\label{sec:related-work}
Fair and diverse selection has become a prominent area of interest in algorithmic and machine learning fairness communities. In the setting of selecting representative data, prior works define metrics for diversity \cite{mitchell2020diversity}, and give algorithms for diverse data selection and summarization \cite{celis2018fair, kleindessner2019fair}. For selecting individuals from a larger pool, prior works on cohort selection and multi-winner elections have studied individual guarantees of fairness \cite{bairaktari2021fair} as well as group parity goals of diversity \cite{bredereck2017multiwinner,  celis2017multiwinner, schumann2017diverse}. Other works have examined how bias and variance may affect different groups differently during a selection process and fairness amounts to remedying implicit bias and variance in the selection process for different groups of individuals \cite{emelianov2020fair, kleinberg2018selection}. Parity or proportional diversity approaches to cohort selection assume the correct amount of representation for each subgroup is known and thus fairness can be achieving a predefined level of diversity. 

When there is no ``merit'' function to guide a selection process, cohort selection can also been seen as a representation problem. Diversity is the goal of a central decision maker while representation is the objective of each group in the population when selecting a cohort. Instead of modeling overall welfare based on the number of representatives from each group, our work considers the welfare of each group based on how representative each cohort member is for that group. Since how well a cohort  serves each group in a population cannot be summarized by a single value, a natural direction is to examine the utilities of all groups of a given cohort that has been selected from a general population. Lexicographical fairness emerges as a reasonable notion of fairness that guarantees Pareto optimality in this setting of multiple objectives or losses. Flanigan et. al. \cite{flanigan2021fair} give an algorithm for recruiting ``citizen's assemblies'' based on sampling from a distribution over representative panels that are generated from leximax selection probabilities over citizens in the population. Our work looks at selecting a representative cohort from a pool of candidates rather than the underlying population which allows a more general model where each member or group in the population has a utility vector describing its utility for each candidate that is being considered for the cohort. Furthermore, we optimize for leximax utilities for each group of interest rather than leximax sample probabilities for each individual in the population.

In telecommunication network design, min-max fairness (MMF) is an important solution concept to lexicographically maximize fractional flow for all parties \cite{ allaloufCentralizedDistributedAlgorithms2008, nace2008max, ogryczak2005telecommunications}. An adjacent problem of lexigraphically maximal flows where there are multiple sinks has also been studied and a polynomial time algorithm exists for finding fractional flow \cite{megiddoOptimalFlowsNetworks1974, megiddoGOODALGORITHMLEXICOGRAPHICALLY1977}. The problem of finding a leximax routing for an unsplittable flow along a network is NP-Complete but finding a 2-approximation is possible \cite{kleinberg1999fairness}. An approximate solution here means that it is not possible to improve a group without decreasing the utility of another group that is more than a factor of 2 worse. 

Lexicographically maximal solutions have also been studied in other domains including bottleneck combinatorial optimization problems \cite{burkard1991lexicographic, della1999improved}, sampling actions for repeated games \cite{balan2008}, allocation of classrooms \cite{kurokawaLeximinAllocationsReal2018} as well as indivisible goods more generally \cite{freeman2019}. It is important to note that unlike the leximax allocation problem, there is no limit on the number of groups gaining utility from the same candidate being included in a cohort or allocated set. Most recently, leximax empirical risk minimization for classification has also been studied \cite{kamani2021pareto, martinezMinimaxPareto2020, martinez21a}. %

\section{The Leximax Objective}

In this paper, we focus on approaches to selecting \emph{lexicographically maximal} (or leximax) representative cohort solutions. We consider a setting in which we'd like to select a solution $S$ from a set of potential solutions $\sols$ such that $S$ is a good representation of some set of key (potentially overlapping) subgroups $\grps = \{G_1, ..., G_m\}$. We measure degree of representation via a utility function $u: \sols \times \grps \rightarrow [0,1]$. 
Ideally, we'd like to select a cohort such that every subgroup is guaranteed to have high utility. However, this may be impossible to achieve in certain settings, such as when the utility functions of two groups are in opposition. Unlike maximizing total welfare, which may result in solutions that neglect the welfare of certain groups or seeking to equalize utilities across groups, which may artificially cap the utility some groups can achieve, lexicographically maximal solutions extend the goal of the classic maxmin objective by seeking to maximize the utility of the worst-off group, and then seeking to maximize the utility of the second-worst-off group subject to this worst-off group's value, etc. This results in a solution concept that seeks to give the best guarantee possible for every key group, rather than just the worst-off. 

We now formally define the leximax objective. 

\begin{definition}
Given two vectors $u$ and $v$ in $\mathbb R^m$, we say that $u$ is lexicographically greater than $v$, or $v \preceq u$, if and only if there exists some $i$ such that for all $j \leq i$ we have $v_j = u_j$, and either $i = m$ or $u_{i + 1} > v_{i + 1}$.
\end{definition}

Applying this definition to the set of sorted group utility vectors obtained from every possible solution gives us a total ordering on these vectors. A leximax solution is any vector that is maximal according to this ordering. In many portions of this paper, in order to reason about the contents of these sorted vectors, we will care about the utility that the $i$th worst-off group receives from a particular solution $S$. We denote this with the bracketed notation $u(S, G_{[i]})$.

\begin{definition}
Given a set of potential solutions $\sols$ and groups $\grps$,
we say that a solution $S \in \sols$ is lexicographically maximal (leximax) if for any other solution $S'$, we have $\langle u(S', G_{[i]})\rangle_{i = 1}^m \preceq \langle u(S, G_{[i]})\rangle_{i = 1}^m$.
\label{def:leximax_def}
\end{definition}

Intuitively, when we seek to find a lexicographically maximal solution, we try to do the best we can for the worst-off group, and then within these potential solutions try to do the best we can for the second-worst-off group, etc. Note that under this definition, groups may achieve varying utilities for different lexicographically maximal solutions, however the vector of sorted group utilities will be unique for any leximax solution. When the solution class is convex and compact and the utility function is continuous with respect to this class, a particular group receives the same utility under any leximax solution.

An attractive feature of lexicographically maximal solutions is that they have an equivalent definition that gives a semantic understanding of the solutions identified by the goal in Definition \ref{def:leximax_def}. We call this notion \emph{tradeoff leximax}.

\begin{proposition}\label{lem:sem-lex}
Given a set of solutions $\sols$ and groups $\grps$, $S \in \sols$ is lexicographically maximal if and only if for any $S'$ and $i \in [m]$ such that $u(S', G_{[i]}) > u(S, G_{[i]})$, there exists some $j < i$ such that $u(S, G_{[j]}) > u(S', G_{[j]})$. 
\end{proposition}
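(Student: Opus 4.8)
The plan is to prove the two directions of the biconditional separately, both by contrapositive, working directly with the sorted utility vectors $\langle u(S,G_{[i]})\rangle_{i=1}^m$ and the lexicographic order $\preceq$ from the first definition. Throughout, for a solution $S$ write $a = \langle u(S,G_{[i]})\rangle_{i=1}^m$ and for a competitor $S'$ write $b = \langle u(S',G_{[i]})\rangle_{i=1}^m$; note that by construction both $a$ and $b$ are nonincreasing in the index, since $G_{[i]}$ denotes the $i$th \emph{worst-off} group. Wait---I should be careful about the indexing convention, since $\preceq$ as defined compares coordinates from $j=1$ upward and asks for the first coordinate where they differ; the ``semantic'' statement also quantifies over $i\in[m]$ and asks for a $j<i$. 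So the two formulations are genuinely comparing the same coordinatewise data, and the whole proof is a statement about: ``$a$ is $\preceq$-maximal among the achievable sorted vectors iff no achievable $b$ beats $a$ at some coordinate $i$ without $a$ first beating $b$ at an earlier coordinate.''

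For the forward direction ($\Rightarrow$), suppose $S$ is leximax but the semantic condition fails: there is some $S'$ and some $i$ with $u(S',G_{[i]}) > u(S,G_{[i]})$ yet $u(S,G_{[j]}) \le u(S',G_{[j]})$ for every $j < i$. I would take the smallest such $i$ (so that for all $j < i$ we in fact have equality $u(S,G_{[j]}) = u(S',G_{[j]})$---this needs a short argument: if some earlier coordinate had a strict inequality $u(S',G_{[j]}) > u(S,G_{[j]})$, then either that $j$ already witnesses a failure with no earlier strict $S$-advantage, contradicting minimality of $i$, or there is an $S$-advantage before it, but then unwinding gives a smaller failure index). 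With equality below $i$ and $u(S',G_{[i]}) > u(S,G_{[i]})$, the definition of $\preceq$ gives $a \preceq b$ with $a \ne b$, i.e. $b$ is strictly lexicographically greater than $a$, contradicting that $S$ is leximax (Definition~\ref{def:leximax_def} says every $S'$ satisfies $b \preceq a$).

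For the reverse direction ($\Leftarrow$), suppose $S$ satisfies the semantic condition but is not leximax: some $S'$ has $b$ strictly lexicographically greater than $a$. Then there is a first coordinate $i$ where they differ, so $u(S,G_{[j]}) = u(S',G_{[j]})$ for all $j < i$ and $u(S',G_{[i]}) > u(S,G_{[i]})$. Applying the semantic condition to this $S'$ and this $i$ yields some $j < i$ with $u(S,G_{[j]}) > u(S',G_{[j]})$, contradicting the equality at coordinate $j$. Hence $S$ is leximax.

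I expect the main obstacle to be the bookkeeping in the forward direction---precisely, the claim that one can reduce to a failure index $i$ below which the two sorted vectors agree exactly. The subtlety is that the semantic condition's hypothesis ``$u(S',G_{[i]}) > u(S,G_{[i]})$'' is only about a single coordinate, and one must rule out the possibility of an earlier strict $S'$-advantage that is itself ``excused'' by a still-earlier $S$-advantage; the clean way is a minimal-counterexample / well-ordering argument on $i$, which I sketched above. Everything else (the translation between the coordinatewise description of $\preceq$ and the two inequality-based formulations) is routine once the indexing convention---$G_{[i]}$ is the $i$th worst-off group, so sorted vectors are nonincreasing and $\preceq$ compares them left-to-right---is fixed.
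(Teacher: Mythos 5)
Your proof is correct and follows essentially the same route as the paper's: both directions come down to examining the first coordinate at which the two sorted utility vectors differ, with your backward direction matching the paper's almost verbatim. Your forward direction is merely the contrapositive of the paper's direct argument (the paper invokes the leximax definition to produce the index $j$ where $S$ strictly beats $S'$, while you run a minimal-failure-index argument to force prefix equality), and the bookkeeping step you flag is handled correctly.
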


\begin{proof}
In the forward direction, let $S$ be a lexicographically maximal solution. Suppose that we have $i$ and $S'$ such that $u(S, G_{[i]}) < u(S', G_{[i]})$. 

Because $S$ is lexicographically maximal, we know that either $S = S'$ or there exists some $j$ such that $u(S, G_{[j]}) > u(S', G_{[j]})$ and for all $j' < j$, $u(S, G_{[j']}) \geq u(S', G_{[j']})$. 

Because $u(S, G_{[i]}) < u(S', G_{[i]})$, we know that $S \neq S'$, and so such a $j$ must exist, and also $j < i$ otherwise we cannot have $u(S, G_{[j']}) \geq u(S', G_{[j']})$ for all $j' < j$, and therefore the requirements of the statement are met. 

In the opposite direction, suppose we have a solution $S$ such that for any $S'$ and $i \in [m]$ such that $u(S', G_{[i]}) > u(S, G_{[i]})$, there exists some $j < i$ such that $u(S, G_{[j]}) > u(S', G_{[j]})$. 

Let $i$ be the smallest $i \in [m]$ such that $u(S, G_{[i]}) \neq u(S', G_{[i]})$. If no such $i$ exists, then $u(S, G_{[i]}) = u(S', G_{[i]})$ for all $i$ and we trivially have $\langle u(S, G_{[i]})\rangle_{i = 1}^m \succeq \langle u(S', G_{[i]})\rangle_{i = 1}^m$.

Otherwise, suppose for contradiction that $u(S, G_{[i]}) < u(S', G_{[i]})$. By our assumption on $S$, there must exist some $j < i$ such that $u(S, G_{[j]}) > u(S', G_{[j]})$, however this is a contradiction because we have $u(S, G_{[j]}) = u(S', G_{[j]})$ for all $j < i$. Therefore we conclude our assumption was false, and therefore $u(S, G_{[i]}) > u(S', G_{[i]})$, and so $\langle u(S, G_{[i]})\rangle_{i = 1}^m \succeq \langle u(S', G_{[i]})\rangle_{i = 1}^m$. 

Therefore for any other $S'$, we have $\langle u(S, G_{[i]})\rangle_{i = 1}^m \succeq \langle u(S', G_{[i]})\rangle_{i = 1}^m$, and thus $S$ is lexicographically maximal.
\end{proof}

This equivalent definition of lexicographic maximality offers an appealing re-interpretation of this objective: a solution is optimal if increasing the utility of any particular group would result in decreasing the utility of a worse-off group.

\section{Approximations of Leximax-Optimal Solutions}

While the leximax objective's goal of doing the best we can for every group is attractive, one potential downside is that the set of leximax-optimal solutions can be incredibly sensitive to small variations in the utility received by certain groups. We consider the following example that illustrates this phenomenon:

\begin{example}[Sensitivity of leximax-optimal solutions]\label{ex:sensitivity}
Consider a simple setting as in Figure~\ref{fig:exact-lex-sensitivity} in which we have two groups, $\grps = \{G_1, G_2\}$, and would like to decide between two potential solutions $\sols = \{S_1, S_2\}$. The utilities for each group and each solution are defined as $u(S_i, G_j) = U_{ij}$ where $U \in [0,1]^{\sols \times \grps}$ is defined as follows:

$$U = \begin{bmatrix}
0 & 1 \\
0.01 & 0.01
\end{bmatrix}$$
                
Clearly the only leximax solution is $S_2$ (with sorted utility vector $(0.01, 0.01)$), because the worst-off group has value $0.01$ rather than receiving $0$ utility as it does in $S_1$ (which has a sorted utility vector of $(0, 1)$. 

However, if we allow for the possibility that the utility estimates are off by even a tiny amount such as $0.01$, suddenly $S_1$ is also a plausibly leximax solution despite having a completely different value for the second-worst-off group.

\begin{figure}
    \centering
    \begin{tikzpicture}
        \draw (0,0) -- (0,4) node[anchor=south] {$G_{1}$};
        \filldraw[black] (0,0.3) circle (2pt) node[anchor=west]{$S_1$};
        \filldraw[black] (0,1) circle (2pt) node[anchor=west]{$S_2$};
        \draw (2, 0) -- (2, 4) node[anchor=south] {$G_{2}$};
        \filldraw[black] (2,1) circle (2pt) node[anchor=west]{$S_2$};
        \filldraw[black] (2,3.5) circle (2pt) node[anchor=west]{$S_1$};
        \draw [decorate,decoration={brace,amplitude=5pt},xshift=0,yshift=0pt]
(-0.2,0.3) -- (-0.2,1) node [black,midway,xshift=-0.6cm]
{\footnotesize $0.01$};
    \draw [decorate,decoration={brace,amplitude=5pt, mirror},xshift=0,yshift=0pt]
(2.1,1.2) -- (2.1,3.3) node [black,midway,xshift=0.6cm]
{\footnotesize $0.99$};

    \end{tikzpicture}
    \caption{Visual representation of the setting in Example~\ref{ex:sensitivity} showing how exact leximax solutions are very sensitive to small changes in utility for less-well-off groups.}
    \label{fig:exact-lex-sensitivity}
\end{figure}
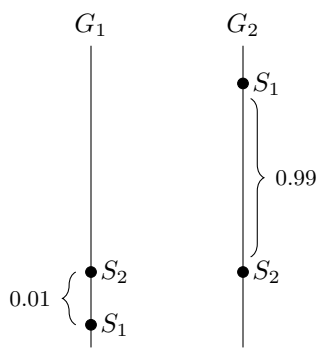
\end{example}

Example~\ref{ex:sensitivity} is notable in that it demonstrates how small variations in the utilities of groups can lead to drastic changes with respect to the types of leximax solutions that are considered optimal. In settings where utilities may be reported with some estimation error, it is therefore incredibly important to consider how these errors might affect how the output optimal solution compares to the true leximax solution that would have been produced given completely accurate utilities. 

Moreover, even when the utilities are believed to be accurate, it may be useful to consider solutions that are not exactly leximax, but are leximax when small variations in the utility are ignored. Example~\ref{ex:sensitivity} is a situation where the exact leximax offers a tiny improvement in the worst-off group at the cost of a huge decrease in the utility of the second-worst-off group. %
A practitioner who views utility differences of less than 0.05 as insignificant might prefer %
$S_1$ as the only \emph{significantly} lexicographically maximal solution because the worst-off groups between $S_2$ and $S_1$ receive comparable utility while the second-worst-off group is significantly better off under $S_1$. 

The search for plausibly exact lexicographic solutions given the potential for some amount of estimation error as well as the need for significantly maximal lexicographic solutions even when working with exact utility values motivates our study of new approximate leximax notions. In this section, we introduce two such notions: first, we introduce a semantic notion of approximate leximax that relaxes the standard leximax definition to consider additional solutions that may be plausibly leximax. The second notion we introduce here seeks solutions that are leximax if only ``significant" improvements are considered (as in the discussion above). Unlike the first notion, the notion of significantly leximax solutions is not a strict relaxation of leximax and may not include the exact leximax solution in some cases.  

\subsection{Relaxations of the Leximax Objective}
\subsubsection{Elementwise Approximation}
The most naive approach to approximation would be to require that the element-wise distance between the sorted utility vectors of the true lexicographically maximal solution and the approximate solution be small: %

\begin{definition}[Element-wise leximax approximation]\label{def:elementwise}
Given a set of $m$ groups $\grps$ and a set of potential solutions $\sols$, let $\ell$ be the sorted vector of utilities attained by any leximax solution. We say that a solution $S \in \sols$ is an $\alpha$-element-wise leximax approximation iff $\max_{i \in [m]}\{\ell_i - u(S, G_{[i]})\} \leq \alpha$.
\end{definition}

While attractive in its simplicity, \cite{dianaLexicographicallyFairLearning2021} observe that in certain contexts, such a definition may be stricter than we can hope for. In particular, if the leximax solution is being computed recursively, small estimation errors in the values of the worst-off group's utility can greatly effect the difference between the utility of better-off groups in a lexicographically maximal solution compared to a solution that maximizes group utilities based off of this incorrect value. Thus, we turn our attention to weaker notions of approximation.

\subsubsection{Tradeoff Approximation}

We introduce a new notion of approximation that is a natural relaxation of the semantic interpretation of leximax solutions provided by the \emph{tradeoff leximax} objective discussed in Proposition~\ref{lem:sem-lex}.

\begin{definition}[\tA{$\epsilon$}]\label{def:tradeoff}
Given a set of $m$ groups, $\grps$, and a set of potential solutions, $\sols$, a solution $S \in \sols$ is \tA{$\epsilon$} if for any $S'$ and $i$ such that 
$u(S, G_{[i]}) < u(S', G_{[i]}) - \epsilon$, there exists a $j < i$ such that 
$u(S, G_{[j]}) > u(S', G_{[j]})$. 
\end{definition}

Intuitively, this definition guarantees that if we can find some other solution that does a lot better on some particular group, then this new solution must also decrease the utility of some worse-off group. 

\tA{$\epsilon$} provides an appealingly simple relaxation of the semantic interpretation of exact leximax solutions. However,
slight variations of this definition, also natural relaxations of leximax, will result in definitions where solutions are not guaranteed to exist. We explore this in the following example:

\begin{example}[Altered versions of \tA{$\epsilon$} may not have any solutions.]\label{ex:sig-trade-problems}
We define a class of alternative tradeoff definitions that we term $(\epsilon_1, \epsilon_2)$-significant tradeoff leximax for reasons that will become clear in Section~\ref{sec:significant-sols} as follows:

\begin{definition}[$(\epsilon_1, \epsilon_2)$-significant tradeoff leximax]\label{def:sig-trade}
Given a set of $m$ groups, $\grps$, and a set of potential solutions, $\sols$, a solution $S \in \sols$ is $(\epsilon_1, \epsilon_2)$-significant tradeoff leximax for any $\epsilon_1, \epsilon_2 \geq 0$ if for any $S'$ and $i$ such that 
$u(S, G_{[i]}) < u(S', G_{[i]}) - \epsilon_1$, there exists a $j < i$ such that 
$u(S, G_{[j]}) > u(S', G_{[j]}) + \epsilon_2$. 
\end{definition}

When $\epsilon_1 = \epsilon$ and $\epsilon_2 = 0$, this notion is equivalent to \tA{$\epsilon$}. When $\epsilon_2 > 0$, the definition requires that any increase by more than $\epsilon_1$ result in a decrease of more than $\epsilon_2$ in a worse-off group. 

However, we demonstrate that for $\epsilon_1, \epsilon_2 > 0$, no solution may exist. Consider $\epsilon_1 = \epsilon_2 = \epsilon$ for the setting depicted in Figure~\ref{fig:sig-trade-diag} where we have two groups and four potential solutions with utilities defined as $u(S_i, G_j) = U_{ij}$, for 
$$U = \begin{bmatrix}
0 & 0.5 + 6\epsilon\\
\epsilon/2 & 0.5 + 4\epsilon\\
\epsilon & 0.5 +2\epsilon \\
3\epsilon/2 & 0.5
\end{bmatrix}$$

Where we assume $\epsilon$ is sufficiently smaller than 0.5. Under these utilities, $S_4$ cannot be $(\epsilon, \epsilon)$-significant tradeoff leximax because $S_3$ improves by more than $\epsilon$ in $G_2$ while only decreasing $G_1$ by $\epsilon/2$. Similarly, $S_3$ and $S_2$ cannot be $(\epsilon, \epsilon)$-significant tradeoff leximax due to the existence of $S_2$ and $S_1$, respectively. This means that $S_2, S_3, S_4$ all cannot be $(\epsilon, \epsilon)$-significant tradeoff leximax. However, we see that $S_4$ improves by more than $\epsilon$ over $S_1$ in $G_1$, so $S_1$ also cannot be  $(\epsilon, \epsilon)$-significant tradeoff leximax. We conclude that no potential solution satisfies this definition\footnote{This example was not tied to the specific choice of $\epsilon_1 = \epsilon_2 = \epsilon$. Similar examples exist for other choices.}.

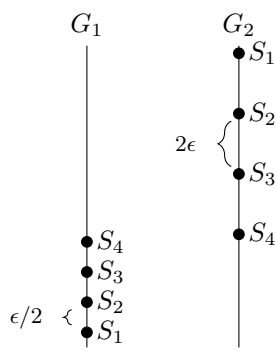
\begin{figure}
    \centering
    \begin{tikzpicture}
        \draw (0,0) -- (0,4) node[anchor=south] {$G_{1}$};
        \filldraw[black] (0,0.2) circle (2pt) node[anchor=west]{$S_1$};
        \filldraw[black] (0,0.6) circle (2pt) node[anchor=west]{$S_2$};
        \filldraw[black] (0,1.0) circle (2pt) node[anchor=west]{$S_3$};
        \filldraw[black] (0,1.4) circle (2pt) node[anchor=west]{$S_4$};
        \draw (2, 0) -- (2, 4) node[anchor=south] {$G_{2}$};
        \filldraw[black] (2,1.5) circle (2pt) node[anchor=west]{$S_4$};
        \filldraw[black] (2,2.3) circle (2pt) node[anchor=west]{$S_3$};
        \filldraw[black] (2,3.1) circle (2pt) node[anchor=west]{$S_2$};
        \filldraw[black] (2,3.9) circle (2pt) node[anchor=west]{$S_1$};
        \draw [decorate,decoration={brace,amplitude=3pt},xshift=0,yshift=0pt]
(-0.2,0.3) -- (-0.2,0.5) node [black,midway,xshift=-0.6cm]
{\footnotesize $\epsilon/2$};
   \draw [decorate,decoration={brace,amplitude=5pt},xshift=0,yshift=0pt]
(1.9, 2.4) -- (1.9,3) node [black,midway,xshift=-0.6cm]{\footnotesize $2\epsilon$};

    \end{tikzpicture}
    \caption{Visual representation of the setting in Example~\ref{ex:sig-trade-problems} demonstrating how no solutions may exist under small alterations to the definition of \tA{$\epsilon$}. }
    \label{fig:sig-trade-diag}
\end{figure}

\end{example}

\paragraph{Computing tradeoff approximations recursively}

The definition of \tA{$\epsilon$} is only useful if we can compute \tA{$\epsilon$} solutions efficiently. To show that this is possible, we relate \tA{$\epsilon$} to a different notion of leximax approximation that arises from a natural algorithmic approach and is closely related to the notion of leximax approximations introduced in \cite{dianaLexicographicallyFairLearning2021}. 

Consider the following approach to computing an exact leximax solution, which follows its definition: Compute the maximum value that can be guaranteed to the worst-off group, then calculate the maximum value that can be guaranteed to the second worst-off group subject to this value, and then recurse on the third, fourth, fifth, etc. until the values for all $m$ groups are fixed and a solution is found. 

However, what if our algorithm is not completely accurate at each step? Introducing some amount of estimation error at each step of the recursion may result in selecting a solution that isn't exact leximax, but can considered approximately leximax because it arose from small estimation errors in our algorithm. We call such solutions \rA{$\epsilon$}, and define them as follows:

\begin{definition}[\rA{$\epsilon$}]\label{def:cg-recursive-approx} 
Given a set of $m$ groups, $\grps$, a set of potential solutions $\sols$, and a choice of allowable `slack' $\vec{\alpha} = (\alpha_1, ..., \alpha_m)$ with $\alpha_i \in \mathbb R_{\geq 0}$, recursively define the sets of solutions $\sols^{\alpha}_0, ..., \sols^{\alpha}_m \subseteq \sols$ such that $\sols^{\alpha}_0 := \sols$ and for each $i = 1, ..., m$, 
$$\sols^{\alpha}_i = \{ S \in \sols^{\alpha}_{i - 1} : u(S, G_{[i]}) \geq \max_{S' \in \sols^{\alpha}_{i - 1}} u(S', G_{[i]}) - \alpha_i\}$$

We say that $S \in \sols$ is an $\epsilon$-recursively approximate leximax solution if there exists an $\vec{\alpha}$ with $\max_{i \in [m]}\alpha_i \leq \epsilon$ such that $S \in \sols_m^{\alpha}$.
\end{definition}

Our definition of \rA{$\epsilon$} is a stronger variant of the definition of approximation used in \cite{dianaLexicographicallyFairLearning2021}. Most importantly, the definition presented in \cite{dianaLexicographicallyFairLearning2021} is less strict because it allows for the choice of allowable slack to depend on each solution. However, the solutions outputted by their algorithms actually achieve the stronger notion presented here. Unlike the weaker version, which is only implied by \tA{$\epsilon$}, we can show that \rA{$\epsilon$} and \tA{$\epsilon$} are equivalent. 

In this definition, the choice of slack, $\vec{\alpha} \in [0, \epsilon]^m$, determines the amount of estimation error at each step. We use this $\vec{\alpha}$ to recursively construct the sets $\sols_i^{\alpha}$ in the same way they would be calculated had we applied a recursive approach to calculating a leximax solution but under-estimated the maximum value by $\alpha_i$ at the $i$th step for each $i = 1, ..., m$.

Unlike our \tA{$\epsilon$} notion of approximation, \rA{$\epsilon$} provides a natural algorithmic interpretation of approximate solutions which allows efficient approaches to computing \rA{$\epsilon$} solutions with respect to a particular choice of slack, as we do in Section~\ref{sec:lp-approaches}\footnote{\cite{dianaLexicographicallyFairLearning2021} give algorithms that calculate \rA{$\epsilon$} solutions because their approach estimates each sequential maxmin value to within $\epsilon$ of its true value, though the notion of efficiency that they achieve does not exactly correspond to polynomial-time algorithms. We provide an alternative polynomial-time algorithm for the cohort selection setting that leverages linear group utilities to offer a more efficient approach.}. Fortunately, we can actually show that these two notions of approximation are equivalent, which means that we can also efficiently compute \tA{$\epsilon$} solutions. 

\begin{theorem}\label{thm:trade-cg-equiv}
For any set of groups, $\grps$, and solutions, $\sols$, the set of \tA{$\epsilon$} solutions is equivalent to the set of \rA{$\epsilon$} solutions.
\end{theorem}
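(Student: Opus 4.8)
The plan is to prove the two set inclusions separately. The more routine direction is \rA{$\epsilon$} $\subseteq$ \tA{$\epsilon$}: given $S \in \sols_m^\alpha$ for some slack vector $\vec\alpha$ with $\max_i \alpha_i \le \epsilon$, I would show $S$ satisfies Definition~\ref{def:tradeoff}. Suppose $S'$ and $i$ witness a violation, i.e. $u(S,G_{[i]}) < u(S',G_{[i]}) - \epsilon$. I would look at the largest index $j \le i$ with $S' \in \sols_{j-1}^\alpha$ (such an index exists since $S' \in \sols = \sols_0^\alpha$; take $j=1$ if $S' \notin \sols_1^\alpha$). The point is that $S \in \sols_j^\alpha \subseteq \sols_{j-1}^\alpha$, so by the defining inequality of $\sols_j^\alpha$ we have $u(S,G_{[j]}) \ge \max_{T \in \sols_{j-1}^\alpha} u(T,G_{[j]}) - \alpha_j \ge u(S',G_{[j]}) - \alpha_j$. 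If $j < i$ this is not quite a strict decrease, so I expect to need to choose $j$ more carefully — specifically the smallest index at which $S'$ ``falls out'' — and argue that at that index $u(S',G_{[j]})$ is strictly below the max by more than $\alpha_j$ while $u(S,G_{[j]})$ is within $\alpha_j$ of the max, giving the strict inequality $u(S,G_{[j]}) > u(S',G_{[j]})$; and separately rule out the case where $S'$ never falls out before step $i$ (then $S' \in \sols_{i-1}^\alpha$, forcing $u(S',G_{[i]}) \le u(S,G_{[i]}) + \alpha_i \le u(S,G_{[i]}) + \epsilon$, contradicting the violation). Care is needed because the sorted-index notation $G_{[j]}$ refers to a different actual group for $S$ than for $S'$; I would rely only on the values $u(\cdot, G_{[j]})$ and the monotonicity of the nested sets.

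For the harder direction, \tA{$\epsilon$} $\subseteq$ \rA{$\epsilon$}, I would take $S$ satisfying the tradeoff condition and explicitly construct a slack vector $\vec\alpha \in [0,\epsilon]^m$ witnessing $S \in \sols_m^\alpha$. The natural choice is to build the nested sets greedily using $S$ itself as a guide: set $\sols_0^\alpha = \sols$, and at step $i$ define $\alpha_i := \max_{S' \in \sols_{i-1}^\alpha} u(S', G_{[i]}) - u(S, G_{[i]})$ (clamped at $0$ if negative), so that $S$ survives into $\sols_i^\alpha$ by construction. The entire content of the proof is then to show $\alpha_i \le \epsilon$ for every $i$. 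I would prove this by induction on $i$: assuming $S \in \sols_{i-1}^\alpha$ and all earlier slacks are $\le \epsilon$, suppose for contradiction $\alpha_i > \epsilon$, i.e. there is $S' \in \sols_{i-1}^\alpha$ with $u(S',G_{[i]}) > u(S,G_{[i]}) + \epsilon$. By the \tA{$\epsilon$} property of $S$, there exists $j < i$ with $u(S,G_{[j]}) > u(S',G_{[j]})$. But $S' \in \sols_{i-1}^\alpha \subseteq \sols_j^\alpha$ means $u(S',G_{[j]}) \ge \max_{T \in \sols_{j-1}^\alpha} u(T,G_{[j]}) - \alpha_j \ge u(S,G_{[j]}) - \alpha_j$ (using $S \in \sols_{j-1}^\alpha$ from the induction hypothesis), so $u(S',G_{[j]}) \ge u(S,G_{[j]}) - \alpha_j$, which combined with $u(S,G_{[j]}) > u(S',G_{[j]})$ only gives $0 < u(S,G_{[j]}) - u(S',G_{[j]}) \le \alpha_j \le \epsilon$ — not yet a contradiction.

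This gap is the main obstacle, and resolving it is where the real work lies. The fix I anticipate is to not stop at the first $j$ produced by the tradeoff property but to iterate: having found $j_1 < i$ with $u(S,G_{[j_1]}) > u(S',G_{[j_1]})$, note this does not by itself contradict $S' \in \sols_{j_1}^\alpha$, but it does say $S'$ is ``behind'' $S$ at coordinate $j_1$; one then has to track how far behind and apply the tradeoff property of $S$ again, or instead argue via a cleaner global invariant. A more promising route is to prove the equivalence through the characterization of leximax itself: show that $S$ is \tA{$\epsilon$} iff, for the specific nested sets $\sols_i^\alpha$ built from $S$ as above, each step's max is attained ``up to $\epsilon$'' by $S$ — essentially reproving a parametrized version of Proposition~\ref{lem:sem-lex} with $\epsilon$-slack. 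Concretely I would establish the lemma: for the greedily-defined $\sols_i^\alpha$, we have $\alpha_i \le \epsilon$ for all $i$ if and only if $S$ is \tA{$\epsilon$}, by mirroring the induction in the proof of Proposition~\ref{lem:sem-lex} — at the step where one derives a contradiction from a minimal index of disagreement, the $\epsilon$ slack threads through because any coordinate where $S'$ beats $S$ by more than $\epsilon$ at index $i$ must be compensated by a strict loss at some earlier index, and that earlier strict loss is exactly what keeps $S'$ from having caused a large slack there. I expect the bookkeeping around which index to pick (smallest disagreement index vs. the index from the tradeoff guarantee) and the fact that $\sols_j^\alpha$ is defined in terms of $S$'s own values to be the delicate points; once the right invariant is stated, each implication should follow by a short induction.
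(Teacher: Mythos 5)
Your \rA{$\epsilon$} $\Rightarrow$ \tA{$\epsilon$} direction is essentially the paper's argument and is sound: take the smallest $j$ at which $S'$ drops out of the nested sets; then $S' \in \sols_{j-1}^\alpha$ fails the level-$j$ constraint while $S$ satisfies it, giving $u(S,G_{[j]}) \ge \max_{T\in\sols_{j-1}^\alpha} u(T,G_{[j]}) - \alpha_j > u(S',G_{[j]})$, and the case where $S'$ survives into $\sols_{i-1}^\alpha$ is impossible because it would force $u(S,G_{[i]}) \ge u(S',G_{[i]}) - \alpha_i \ge u(S',G_{[i]}) - \epsilon$, contradicting the assumed violation; this also gives $j<i$ for free.

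The genuine gap is in the \tA{$\epsilon$} $\Rightarrow$ \rA{$\epsilon$} direction, and it has a one-line fix that you missed rather than needing the iteration or ``global invariant'' machinery you sketch. With your greedy slack $\alpha_j = \max_{T \in \sols_{j-1}^\alpha} u(T, G_{[j]}) - u(S, G_{[j]})$, the defining constraint of $\sols_j^\alpha$ is, \emph{by construction}, exactly the condition $u(\cdot, G_{[j]}) \ge u(S, G_{[j]})$: the threshold $\max_{T \in \sols_{j-1}^\alpha} u(T,G_{[j]}) - \alpha_j$ equals $u(S, G_{[j]})$ identically, not merely ``$\ge u(S,G_{[j]}) - \alpha_j$''. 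You weakened it by substituting $\max_{T} u(T,G_{[j]}) \ge u(S, G_{[j]})$ into the constraint, which throws away precisely the $\alpha_j$ that the greedy definition hands back, and that is the sole source of the ``not yet a contradiction'' you ran into. Using the exact identity, the argument closes immediately (and this is how the paper does it): if some $S' \in \sols_{i-1}^\alpha$ satisfies $u(S', G_{[i]}) > u(S,G_{[i]}) + \epsilon$, then for every $j < i$ we have $S' \in \sols_{i-1}^\alpha \subseteq \sols_j^\alpha$ and hence $u(S', G_{[j]}) \ge u(S, G_{[j]})$ with no error term, while the \tA{$\epsilon$} property of $S$ demands some $j < i$ with $u(S, G_{[j]}) > u(S', G_{[j]})$ --- a contradiction, so $\alpha_i \le \epsilon$ for every $i$ (and $\alpha_i \ge 0$ automatically since $S \in \sols_{i-1}^\alpha$, so no clamping is needed). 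Your proposed substitutes --- iterating the tradeoff property or reproving an $\epsilon$-parametrized version of Proposition~\ref{lem:sem-lex} --- are left unexecuted and, as written, do not yield the bound, so the proposal does not establish the harder inclusion.
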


\begin{proof}
First, suppose we have some \tA{$\epsilon$} solution $S$. 

Recursively define an amount of allowable slack $\vec{\alpha} \in \mathbb R^m$ as follows, where $\sols_0^{\alpha}, ..., \sols_m^{\alpha}$ are the recursively defined sets discussed in Definition~\ref{def:cg-recursive-approx}:

$$\alpha_{i} = \max_{S' \in \sols_{i - 1}^{\alpha}}u(S', G_{[i]}) - u(S, G_{[i]})$$

In other words, $\alpha_i$ is exactly the distance from the utility of the $i$th worst-off group for $S$ to the maximal utility achieved by any $i$th worst-off group in the $i - 1$th recursive set. 

Clearly under this choice of slack, $S \in \sols_m^{\alpha}$. If $\max_{i \in [m]} \alpha_i \leq \epsilon$, then $S$ is also \rA{$\epsilon$} and we are done. Otherwise, assume for contradiction that this is not the case, and let $\alpha_i$ be the smallest $i$ such that $\alpha_i > \epsilon$. By definition, this means we have some $S' \in \sols_{i - 1}^{\alpha}$ such that $u(S', G_{[i]}) > u(S, G_{[i]}) + \epsilon$.

Moreover, by definition of our $\alpha$, we also know that for all $i' \leq i$, we have
\begin{align*}
    u(S', G_{[i']}) &\geq \max_{S'' \in \sols_{i' - 1}^{\alpha}}u(S'', G_{[i']}) - \alpha_{i'} \\
    &= \max_{S'' \in\sols_{i' - 1}^{\alpha}}u(S'', G_{[i']}) - \left( \max_{S' \in \sols_{i' - 1}^{\alpha}}u(S', G_{[i']}) - u(S, G_{[i']})\right)\\
    &= u(S, G_{[i']})
\end{align*}
and therefore $u(S', G_{[i']}) \geq u(S, G_{[i']})$.

However, because $S$ is \tA{$\epsilon$}, we must also have some $i' < i$ such that $u(S', G_{[i']}) < u(S, G_{[i']})$. This is a contradiction, and so we conclude that $\alpha_i \leq \epsilon$ for all $i \in [m]$, and therefore $S$ is also \rA{$\epsilon$}. 

In the other direction, suppose that $S$ is \rA{$\epsilon$} with respect to some choice of allowable slack $\vec{\alpha}\in\mathbb{R}^m$. We define a new choice of slack $\vec{\alpha'}:[m] \times \sols \rightarrow \mathbb R_{\geq 0}$ as follows:
$$\alpha'_i(S') = \begin{cases}\epsilon & S' = S \\ 0 & \text{otherwise} \end{cases}$$

Consider any $S' \in \sols$ and $i \in [m]$ such that $$u(S', G_{[i]}) > u(S, G_{[i]}) + \alpha'_i(S) = u(S, G_{[i]}) + \epsilon.$$

Because $S$ is \rA{$\epsilon$}, we therefore must have $S' \not\in \sols_i^{\alpha}$ to avoid a contradiction. Let $j$ be the smallest $j$ such that $S'\not\in \sols_j^{\alpha}$. Here, we are guaranteed that 
$$u(S', G_{[j]}) < \max_{S'' \in \sols_{j}^{\alpha}} u(S'', G_{[j]}) - \alpha_j \leq u(S, G_{[j]})$$

Where the left-hand inequality arises because $S'$ must have been too far below the maximum at $j$ because it was eliminated, and the right-hand side is because we know that $S \in \sols_j^{\alpha}$. Thus, 
$$u(S', G_{[j]}) < u(S, G_{[j]})$$
and so because we know that $\alpha'_j(S') = 0$, we have found a $j < i$ such that 
$$u(S', G_{[j]}) < u(S, G_{[j]}) - \alpha_j(S')$$

and therefore $S$ must also be \tA{$\epsilon$}. 

We have shown that any \tA{$\epsilon$} solution must also be \rA{$\epsilon$} and vice versa, so we conclude that the two notions are equivalent. 
\end{proof}

\subsection{Significantly Leximax Solutions}\label{sec:significant-sols}

\tA{$\epsilon$} solutions are strict relaxations of the exact leximax objective. Any leximax-optimal solution will also be \tA{$\epsilon$} and will also be \rA{$\epsilon$} for any $\epsilon \geq 0$ (by simply selecting the allowable slack to be $\alpha_i = 0$ for all $i \in [m]$). Similarly, any $\epsilon$-tradeoff (resp. recursively) approximate solution will also be $\epsilon'$-tradeoff (recursively) approximate for any $\epsilon' \geq \epsilon$. 

In this section, we introduce a modified notion of \rA{$\epsilon$} that is not a relaxation of the exact leximax objective but rather tries to get significant improvements in the quality of solutions, using the allowed slack. This notion  constrains the choices of slack so that solutions considered leximax due to only insignificant improvements in the utility of worse-off groups are ignored. Here, the only slack considered is where all allowable slack values are set to exactly $\epsilon$, rather than some value that is at most $\epsilon$.

\begin{definition}[\sig{$\epsilon$}]\label{def:sig-recursive-approx} 
Given a set of groups $\grps$ with $|\grps| = m$ and a set of potential solutions $\sols$, recursively define the sets of solutions $\sols^{\epsilon}_0, ..., \sols^{\epsilon}_m \subseteq \sols$ such that $\sols^{\epsilon}_0 := \sols$ and for each $i = 1, ..., m$, 
$$\sols^{\epsilon}_i = \{ S \in \sols^{\epsilon}_{i - 1} : u(S, G_{[i]}) \geq \max_{S' \in \sols^{\epsilon}_{i - 1}} u(S', G_{[i]}) - \epsilon\}$$

We say that $S \in \sols$ is \sig{$\epsilon$} if $S \in \sols_m^{\epsilon}$.
\end{definition}

Why does this make sense as a way to identify significant solutions? Intuitively, setting every slack value to the maximum possible $\epsilon$ requires that the valid solutions be leximax with respect to the larger set of potential solutions when some error term is allowed, rather than putting a lot of weight on small differences in earlier groups. We present the following example to see this in practice:

\begin{example}[Significantly recursive approximations]\label{ex:sig-rec}
Consider two groups and two solutions as in Figure~\ref{fig:sig-rec-diag} with utilities
$$u(S_1, G_{1}) = \epsilon, u(S_2, G_{1}) = 0 , u(S_1, G_{2}) = 0.5, u(S_2, G_{2}) = 1.$$

Both $S_1$ and $S_2$ are \rA{$\epsilon$} approximations. If we set $\alpha_1 < \epsilon$, then $S_1$ because the only acceptable solution and thus an \rA{$\epsilon$}-approximate solution. If we set $\alpha_1 = \epsilon$, $S_2$ becomes an \rA{$\epsilon$}-approximate solution. 
We would expect a satisfying significant approximation notion to identify $S_2$ as the only $\epsilon$-significant approximation because it's not too far below $S_1$ on the worst-off group, but does much better on the second-worst-off group. An \sig{$\epsilon$} approximation does give us this separation between $S_1$ and $S_2$, because while both $S_1$ and $S_2$ are included in the first-level of recursion, $\sols_1^{\epsilon}$, $S_1$ is too far below the maximum to be included in $\sols_2^{\epsilon}$, so $S_2$ is the only \sig{$\epsilon$} approximation in this example. 

\begin{figure}
    \centering
    \begin{tikzpicture}
        \draw (0,0) -- (0,4) node[anchor=south] {$G_{1}$};
        \filldraw[black] (0,0.5) circle (2pt) node[anchor=west]{$S_2$};
        \filldraw[black] (0,1.3) circle (2pt) node[anchor=west]{$S_1$};
        \draw (2, 0) -- (2, 4) node[anchor=south] {$G_{2}$};
        \filldraw[black] (2,1.5) circle (2pt) node[anchor=west]{$S_1$};
        \filldraw[black] (2,3.5) circle (2pt) node[anchor=west]{$S_2$};
        \draw [decorate,decoration={brace,amplitude=5pt},xshift=0,yshift=0pt]
(-0.2,0.5) -- (-0.2,1.3) node [black,midway,xshift=-0.6cm]
{\footnotesize $\leq \epsilon$};
    \draw [decorate,decoration={brace,amplitude=5pt, mirror},xshift=0,yshift=0pt]
(2.1,1.7) -- (2.1,3.3) node [black,midway,xshift=0.6cm]
{\footnotesize $> \epsilon$};

    \end{tikzpicture}
    \caption{Visual representation of the setting in Example~\ref{ex:sig-rec} demonstrating how Definition~\ref{def:sig-recursive-approx} identifies significantly leximax solutions. }
    \label{fig:sig-rec-diag}
\end{figure}
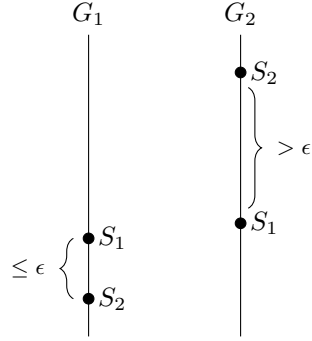
\end{example}

So far, we have been rather loose in arguing about why the solutions identified as \sig{$\epsilon$} might be preferred over exact leximax or the more general class of \rA{$\epsilon$} solutions. We offer a more formal characterization here, but begin by taking a step back to reframe what the contents of the recursively defined sets from Definition~\ref{def:cg-recursive-approx}, $\sols_1^{\alpha}, ..., \sols_m^{\alpha}$ for some choice of slack $\vec{\alpha}$, can tell us about potential leximax solutions.

Intuitively, $\sols_i^{\alpha}$ contains all solutions that, with respect to the first $i$ groups, could feasibly be solutions that are \rA{$\epsilon$} allowing for a slack of $\vec{\alpha}$, and are guaranteed to be within $\epsilon$ of the first $i$ coordinates of any final \rA{$\epsilon$} solution with respect to $\vec{\alpha}$, i.e. any $S \in \sols_m^{\alpha}$.

This means that looking at the maximum utility achieved by any solution in each recursive group, $\langle \max_{S \in \sols_i^{\alpha}}u(S, G_{[i]})\rangle_{i = 1}^m$ gives us a sense of the type of solution that results from allowing $\vec{\alpha}$ as slack. While there may not exist a $S' \in \sols_m^{\alpha}$ such that $\langle u(S', G_{[i]}) \rangle_{i = 1}^{m} = \langle \max_{S \in \sols_i^{\alpha}}u(S, G_{[i]})\rangle_{i = 1}^m$, we are guaranteed that any $S' \in \sols_m^{\alpha}$ will be elementwise within $\epsilon$ of this vector of maximums. 

We can show that out of all possible choices of slack, the one used by the definition of \sig{$\epsilon$}, $\vec{\alpha} = (\epsilon, ..., \epsilon)$ results in the best-possible sequence of maximum set values (i.e. it will be lexicographically greater than the maximums attained via any other choice of slack). In other words, this backs up the motivation behind our definition of \sig{$\epsilon$} in that it promises us that any \sig{$\epsilon$} solution will be elementwise within $\epsilon$ of the lexicographically best solution we could possibly hope for under an optimal choice of slack. 

\begin{theorem}[Leximax properties of \sig{$\epsilon$}]\label{lem:sig-prop}
Given a set of groups, $\grps$, and solutions, $\sols$, let $\sols_1^{\epsilon}, ..., \sols_m^{\epsilon}$ be the recursively defined sets constructed with a slack of $\epsilon$ at each step, as used in the definition of \sig{$\epsilon$}, and for any $\vec{\alpha} \in \mathbb{R}_{\geq 0}^m$, let $\sols_1^{\alpha}, ..., \sols_m^{\alpha}$ be the sets that arise when the amount of allowable slack at each level is set according to $\vec{\alpha}$. Then, for any $\vec{\alpha} \in \mathbb{R}_{\geq 0 }^m$, we have 
$$\langle\max_{S \in \sols_i^{\epsilon}}u(S, G_{[i]})\rangle_{i = 1}^m \succeq \langle\max_{S \in \sols_i^{\alpha}}u(S, G_{[i]})\rangle_{i = 1}^m $$

In other words, the vector of maximums attained in each $\sols_i^{\epsilon}$ is lexicographically maximal compared to any other choice of slack of size at most $\epsilon$.
\end{theorem}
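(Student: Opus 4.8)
The plan is to reduce the lexicographic inequality to a single \emph{conditional} set-inclusion statement about the recursive sets of Definition~\ref{def:cg-recursive-approx}, and then establish that statement by induction on the recursion depth. Throughout I assume $\max_{i\in[m]}\alpha_i \le \epsilon$, which is the regime described in the ``in other words'' restatement (for completely unconstrained $\vec\alpha$ the claim is false: with two groups and solutions whose sorted utility vectors are $(0.9,0.95)$ and $(0.8,1)$ and $\epsilon = 0.05$, a first-step slack of $0.2$ already produces a lexicographically larger vector of maxima). For any slack vector $\beta$ write $M_i^\beta := \max_{S\in\sols_i^\beta}u(S,G_{[i]})$, so the two sides of the claimed inequality are $\langle M_i^\epsilon\rangle_{i=1}^m$ and $\langle M_i^\alpha\rangle_{i=1}^m$. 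The first thing to record is that $M_i^\beta = \max_{S\in\sols_{i-1}^\beta}u(S,G_{[i]})$: the maximizer of $u(\cdot,G_{[i]})$ over $\sols_{i-1}^\beta$ meets (with equality) the threshold defining $\sols_i^\beta$, so it survives into $\sols_i^\beta$ and restricting the maximum to $\sols_i^\beta$ changes nothing. In particular $M_i^\beta$ depends only on $\sols_{i-1}^\beta$.

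The heart of the argument is the following claim: \emph{if $M_j^\epsilon = M_j^\alpha$ for every $j < i$, then $\sols_{i-1}^\alpha \subseteq \sols_{i-1}^\epsilon$.} I would prove this by induction on $i$. The base case $i=1$ is immediate since the hypothesis is vacuous and $\sols_0^\alpha = \sols = \sols_0^\epsilon$. For the inductive step, the hypothesis for $i$ in particular gives $M_j^\epsilon = M_j^\alpha$ for all $j < i-1$, so the inductive hypothesis yields $\sols_{i-2}^\alpha \subseteq \sols_{i-2}^\epsilon$. Now take any $S\in\sols_{i-1}^\alpha$. Then $S \in \sols_{i-2}^\alpha \subseteq \sols_{i-2}^\epsilon$, and by the definition of $\sols_{i-1}^\alpha$ together with the observation above, $u(S,G_{[i-1]}) \ge M_{i-1}^\alpha - \alpha_{i-1} \ge M_{i-1}^\alpha - \epsilon = M_{i-1}^\epsilon - \epsilon$, the last equality using the hypothesis $M_{i-1}^\epsilon = M_{i-1}^\alpha$. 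Since $M_{i-1}^\epsilon = \max_{S'\in\sols_{i-2}^\epsilon}u(S',G_{[i-1]})$, this is exactly the threshold condition for membership in $\sols_{i-1}^\epsilon$, so $S\in\sols_{i-1}^\epsilon$, completing the induction. Note that $\alpha_{i-1}\le\epsilon$ is used only in that middle inequality.

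With the claim in hand the theorem follows quickly. If $M_i^\epsilon = M_i^\alpha$ for every $i$ the two vectors coincide and $\succeq$ holds trivially. Otherwise let $i^\ast$ be the least index with $M_{i^\ast}^\epsilon \ne M_{i^\ast}^\alpha$; then $M_j^\epsilon = M_j^\alpha$ for all $j < i^\ast$, the claim gives $\sols_{i^\ast-1}^\alpha \subseteq \sols_{i^\ast-1}^\epsilon$, and hence $M_{i^\ast}^\alpha = \max_{S\in\sols_{i^\ast-1}^\alpha}u(S,G_{[i^\ast]}) \le \max_{S\in\sols_{i^\ast-1}^\epsilon}u(S,G_{[i^\ast]}) = M_{i^\ast}^\epsilon$; combined with $M_{i^\ast}^\epsilon\ne M_{i^\ast}^\alpha$ this forces $M_{i^\ast}^\epsilon > M_{i^\ast}^\alpha$, which (with a common prefix of length $i^\ast-1$) is precisely the definition of $\langle M_i^\alpha\rangle_{i=1}^m \preceq \langle M_i^\epsilon\rangle_{i=1}^m$.

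The main obstacle is that the sets $\sols_i^\beta$ are not monotone in $\beta$ in any directly usable sense: enlarging an early slack keeps more solutions around, but it also raises a later group's maximum and hence raises the later cutoff, so plain set inclusion does not propagate down the recursion. The real work is therefore (i) recognizing that the right invariant is the \emph{conditional} inclusion of the claim, which does propagate, and (ii) observing that the lexicographic comparison we are trying to prove hands us exactly the conditioning hypothesis — equality of the maxima on a prefix — for free. Everything after that is bookkeeping.
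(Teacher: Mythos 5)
Your proposal is correct and takes essentially the same route as the paper: an induction on the recursion level whose hypothesis is equality of the maxima on a prefix and whose conclusion is the set inclusion $\sols_i^{\alpha} \subseteq \sols_i^{\epsilon}$, from which the lexicographic comparison of the vectors of maxima follows immediately. Your explicit observation that the claim requires $\max_{i}\alpha_i \le \epsilon$ (with a counterexample otherwise) is a fair point of added care: the paper's literal ``any $\vec{\alpha}\in\mathbb{R}_{\geq 0}^m$'' is too broad, and its proof also uses $\alpha_j \le \epsilon$ implicitly, consistent with the ``slack of size at most $\epsilon$'' reading.
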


\begin{proof}
Let $\vec{\alpha} \in \mathbb{R}_{\geq 0}^m$. We proceed by induction on $i = 1, ..., m$.

As our base case, we note that, 

$$\max_{S \in \sols_1^{\alpha}}u(S, G_{[1]}) = \max_{S \in \sols}u(S, G_{[1]})$$

and therefore $\max_{S \in \sols_1^{\alpha}}u(S, G_{[1]}) = \max_{S \in \sols}u(S, G_{[1]}) = \max_{S \in \sols_1^{\epsilon}}u(S, G_{[1]})$.

For the recursive case, assume that for all $j < i$, we have $\max_{S \in \sols_j^{\epsilon}}u(S, G_{[j]}) = \max_{S \in \sols_j^{\alpha}}u(S, G_{[j]})$.

For any $S \in \sols_{i}^{\alpha}$, we are guaranteed that for all $j < i$, 
$$u(S, G_{[j]}) \geq \max_{S' \in \sols_j^{\alpha}}u(S', G_{[j]}) - \alpha_j \geq \max_{S' \in \sols_j^{\alpha}}u(S', G_{[j]}) - \epsilon$$

and therefore, by our inductive assumption, 
$$u(S, G_{[j]}) \geq \max_{S' \in \sols_j^{\epsilon}}u(S', G_{[j]}) - \epsilon$$
and so $S \in \sols_i^{\epsilon}$ as well, and therefore because $\sols_i^{\alpha} \subseteq \sols_i^{\epsilon}$, we must have $\max_{S \in \sols_i^{\epsilon}}u(S, G_{[i]}) \geq \max_{S \in \sols_i^{\alpha}}u(S, G_{[i]})$.

Therefore, we've shown that either $\max_{S \in \sols_i^{\epsilon}}u(S, G_{[i]}) = \max_{S \in \sols_i^{\alpha}}u(S, G_{[i]})$ for all $i = 1, ..., m$, or there exists some $i$ such that  $\max_{S \in \sols_i^{\epsilon}}u(S, G_{[i]}) > \max_{S \in \sols_i^{\alpha}}u(S, G_{[i]})$ and $\max_{S \in \sols_j^{\epsilon}}u(S, G_{[j]}) = \max_{S \in \sols_j^{\alpha}}u(S, G_{[j]})$ for all $j < i$, and so the vector of maximums attained by setting the allowable slack to be $\epsilon$ at all levels is lexicographically maximal. 
\end{proof}

Theorem~\ref{lem:sig-prop} tells us that out of all the ways we could identify approximate leximax solutions that ignore variations of less than $\epsilon$, an $\epsilon$-significant solution is guaranteed to be element-wise within $\epsilon$ on the lexicographically maximal best-possible guarantee we can give for each group at each level of recursion. 

Ideally, we could obtain a similar notion to \sig{$\epsilon$} with a satisfying semantic meaning as for \rA{$\epsilon$} by modifying our definition of \tA{$\epsilon$} so that any solution that improves the $i$th group by more than $\epsilon$ must also decrease some worse-off group by more than $\epsilon$. However, as we saw in Example~\ref{ex:sig-trade-problems}, modifying the original tradeoff definition in this way surprisingly results in an overly strict notion due to some instability arising from the pairwise comparisons that tradeoff approximations rely on. In particular, solutions that satisfy this notion may not exist. Note that in Example~\ref{ex:sig-trade-problems}, no solution satisfied $(\epsilon, \epsilon)$-significant tradeoff leximax, which is equivalent to the modified definition suggested here, but $S_2$ is an \sig{$\epsilon$} approximation and $S_2, S_3, S_4$ are all valid \rA{$\epsilon$} solutions.

\subsection{Approximations in the Presence of Noise}\label{sec:noisy-approx}

So far, we have considered approximate leximax solutions with the assumption that the utilities used to calculate these solutions are known to be correct. However, a natural question is how such approximations behave if the reported utilities contain some small amount of noise. 

In the case of \tA{$\epsilon$} solutions, assuming a small amount of additive noise for each utility has the potential for resulting in solutions that do not satisfy tradeoff guarantees. In particular, noise that is solution-specific can cause individual solutions to be ``kicked out'' of the recursively defined sets, even though all solutions near them are included. We demonstrate this behavior in the following example:

\begin{example}[\tA{$\epsilon$} solutions are not robust to noise.]\label{ex:rec-not-trade}

We consider a setting in which we have two groups, $\grps = \{G_1, G_2\}$ and three potential solutions $\sols = \{S_1, S_2, S_3\}$. The utilities each group derives are defined as $u(S_i, G_j) = U_{ij}$ where $U$ is defined as follows (assume $\epsilon << 0.1$):

$$U = \begin{bmatrix}
0.1 & 0.2\\
0.1 + \epsilon/100 & 0.8\\
0.1 + \epsilon & 0.2 
\end{bmatrix}$$

Furthermore, assume we have a slightly noisy version of utilities in which $u(S_2, G_1)$ changes from $0.1 + \epsilon/100$ to $0.1 - \epsilon/100$. Figure~\ref{fig:rec-not-trade-diag} provides a visual representation of this instance, where the noisy verison of $S_2$ is shown in red. 

In the non-noisy version, $S_1$ can never be considered \tA{$\epsilon$} because $S_2$ does much better than $S_1$ on $G_2$, and is still above $S_1$ on $G_1$. 

However, in the noisy version, which introduces only a tiny amount of noise ($\epsilon/50$), much smaller than the allowed approximation threshold ($\epsilon$), results in a setting where $S_1$ can be considered \tA{$\epsilon$}. 

\end{example}

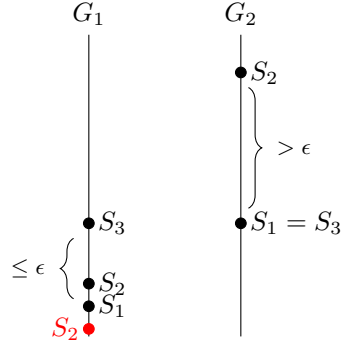
\begin{figure}
    \centering
    \begin{tikzpicture}
        \draw (0,0) -- (0,4) node[anchor=south] {$G_{1}$};
        \filldraw[black] (0,0.4) circle (2pt) node[anchor=west]{$S_1$};
        \filldraw[black] (0,0.7) circle (2pt) node[anchor=west]{$S_2$};
        \filldraw[red] (0,0.1) circle (2pt) node[anchor=east]{$S_2$};
        \filldraw[black] (0,1.5) circle (2pt) node[anchor=west]{$S_3$};
        \draw (2, 0) -- (2, 4) node[anchor=south] {$G_{2}$};
        \filldraw[black] (2,1.5) circle (2pt) node[anchor=west]{$S_1 = S_3$};
        \filldraw[black] (2,3.5) circle (2pt) node[anchor=west]{$S_2$};
        \draw [decorate,decoration={brace,amplitude=5pt},xshift=0,yshift=0pt]
(-0.2,0.5) -- (-0.2,1.3) node [black,midway,xshift=-0.6cm]
{\footnotesize $\leq \epsilon$};
    \draw [decorate,decoration={brace,amplitude=5pt, mirror},xshift=0,yshift=0pt]
(2.1,1.7) -- (2.1,3.3) node [black,midway,xshift=0.6cm]
{\footnotesize $> \epsilon$};

    \end{tikzpicture}
    \caption{Visual representation of the setting in Example~\ref{ex:rec-not-trade} showing that when computed in the presence of noise, \tA{$\epsilon$} solutions may break down. The noisy version consists of updating $S_2$ to the location highlighted in red.}
    \label{fig:rec-not-trade-diag}
\end{figure}

By making the distance between $S_2$ and $S_1$ arbitrarily small, we can construct examples where even when the amount of noise is negligible compared to the allowed approximation factor, $S_1$ can still potentially be incorrectly classified as \tA{$\epsilon$}. 

We note that we can define a stricter notion of tradeoff approximation that guarantees a solution will be \tA{$\epsilon$} even if calculated with noisy utilities, but for the same reasons as demonstrated in Example~\ref{ex:sig-trade-problems}, such solutions may not always exist, making it difficult to find solutions that are guaranteed to be \tA{$\epsilon$} in a noisy setting.

\begin{lemma}\label{lem:noisy-trade}
Recall the notion of $(\epsilon_1, \epsilon_2)$-significant tradeoff leximax as presented in Definition~\ref{def:sig-trade}. Any $(\epsilon - 2\delta, 2\delta)$-significant tradeoff leximax solution when calculated using noisy utilities within an additive $\delta$ of their true values is guaranteed to be \tA{$\epsilon$} with respect to the true utilities. 
\end{lemma}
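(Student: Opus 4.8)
The plan is to transfer the tradeoff guarantee across the noise by using the $2\delta$ of built-in slack in the $(\epsilon-2\delta,2\delta)$-significant tradeoff definition to absorb the $\delta$-perturbation on each of the two groups being compared. Write $\hat u$ for the noisy utilities, so $|\hat u(S,G) - u(S,G)| \le \delta$ for every solution $S$ and group $G$, and note that for the statement to be non-vacuous we have $\epsilon \ge 2\delta$, so that $(\epsilon-2\delta,2\delta)$-significant tradeoff leximax (Definition~\ref{def:sig-trade}) is well-defined. Let $S$ be an $(\epsilon-2\delta,2\delta)$-significant tradeoff leximax solution with respect to $\hat u$; the goal is to verify Definition~\ref{def:tradeoff} for $S$ with respect to $u$.

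The first ingredient I would establish is an elementary sorting lemma: if two length-$m$ vectors agree up to $\delta$ coordinatewise, so do their sorted versions. Concretely, for any fixed solution $T$ and any index $i$, $|\hat u(T, G_{[i]}) - u(T, G_{[i]})| \le \delta$, where $G_{[i]}$ denotes the $i$th worst-off group under whichever utility assignment is in force. This follows because $u(T,\cdot) \le \hat u(T,\cdot) + \delta$ holds coordinatewise, and sorting is monotone under coordinatewise $\le$, so $u(T, G_{[i]}) \le \hat u(T, G_{[i]}) + \delta$, with the reverse inequality symmetric. This is the only place where the permutation underlying the bracket notation must be handled with care, and I expect it to be the sole genuinely subtle point; everything else is bookkeeping with the slack.

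With the lemma in hand I would fix an arbitrary $S'$ and $i$ with $u(S, G_{[i]}) < u(S', G_{[i]}) - \epsilon$ and produce the required $j$. Applying the sorting lemma to $S$ and to $S'$ at coordinate $i$ gives $\hat u(S, G_{[i]}) \le u(S, G_{[i]}) + \delta$ and $\hat u(S', G_{[i]}) \ge u(S', G_{[i]}) - \delta$, hence $\hat u(S, G_{[i]}) < \hat u(S', G_{[i]}) - (\epsilon - 2\delta)$. Since $S$ is $(\epsilon-2\delta,2\delta)$-significant tradeoff leximax under $\hat u$, Definition~\ref{def:sig-trade} supplies some $j < i$ with $\hat u(S, G_{[j]}) > \hat u(S', G_{[j]}) + 2\delta$. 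Applying the sorting lemma once more, now at coordinate $j$, yields $u(S, G_{[j]}) \ge \hat u(S, G_{[j]}) - \delta > \hat u(S', G_{[j]}) + \delta \ge u(S', G_{[j]})$. Thus whenever some $S'$ improves on $S$ by more than $\epsilon$ at some coordinate $i$ under the true utilities, there is a $j < i$ with $u(S, G_{[j]}) > u(S', G_{[j]})$, so by Definition~\ref{def:tradeoff} $S$ is \tA{$\epsilon$}, completing the argument.
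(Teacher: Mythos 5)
Your proof is correct and takes essentially the same route as the paper's: push the assumed $\epsilon$-gap in true utilities into the noisy setting at a cost of $2\delta$, invoke the $(\epsilon-2\delta,2\delta)$-significant tradeoff property there, and transfer the resulting $2\delta$-strict decrease at index $j<i$ back to the true utilities. The only difference is that you state and justify explicitly the sorting stability fact $|\hat u(T,G_{[i]})-u(T,G_{[i]})|\le\delta$, which the paper uses implicitly in this proof (and establishes separately as a claim inside the proof of Lemma~\ref{lem:sig-w-noise}).
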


\begin{proof}
Let $u: \sols \times \grps \rightarrow \mathbb{R}_{\geq 0}$ be the true utilities, and $u_{\delta}: \sols \times \grps \rightarrow \mathbb{R}_{\geq 0}$ define noisy utilities such that $u_\delta(S, G) \in [u(S, G) - \delta, u(S, G) + \delta]$ for all $S \in \sols$ and $G \in \grps$.

Suppose we have some $S'$ and $i$ such that 
$$u(S', G_{[i]}) > u(S, G_{[i]}) + \epsilon$$

In the noisy setting, we are therefore guaranteed to have
$$u_{\delta}(S', G_{[i]}) > u_{\delta}(S, G_{[i]}) + \epsilon - 2\delta$$

By definition of $(\epsilon - 2\delta, 2\delta)$-significant tradeoff leximax, we therefore have some $j < i$ such that 

$$u_{\delta}(S', G_{[j]}) < u_{\delta}(S, G_{[j]}) - 2\delta$$

Switching back to non-noisy utilities, we are guaranteed that

$$u_{\delta}(S', G_{[j]}) < u_{\delta}(S, G_{[j]}) - 2\delta + 2\delta$$
$$u_{\delta}(S', G_{[j]}) < u_{\delta}(S, G_{[j]})$$

and therefore $S$ is \tA{$\epsilon$}. 
\end{proof}

Having considered how noise may affect \tA{$\epsilon$} approximations, we now turn to \sig{$\epsilon$} approximations. Here, we find that \sig{$\epsilon$} solutions are somewhat robust to noise, in that they satisfy a slightly relaxed definition of significance. 

First, we note that in Example~\ref{ex:rec-not-trade}, $S_1$ is also \sig{$\epsilon$} in the noisy setting, but not in the non-noisy setting, and so this example also demonstrates how the standard definition of \sig{$\epsilon$} may not be robust to noise. However, we can offer the following guarantee with respect to a modified notion:

\begin{lemma}\label{lem:sig-w-noise}
Say that a solution $S$ is \sig{$(\alpha_1, \alpha_2)$} if there exists some choice of slack $\vec{\beta} = (\beta_1, ..., \beta_m)$ with $\beta_i: \sols \rightarrow [\alpha_1, \alpha_2]$ such that $S \in \sols_m^{\beta}$, where $\sols_0^{\beta} = \sols$ and 
$$\sols_i^{\beta} = \{S \in \sols_{i - 1}^{\beta}: u(S, G_{[i]}) \geq \max_{S' \in \sols_{i - 1}^{\beta}}u(S', G_{[i]}) - \beta_i(S)\}.$$

Then, any \tA{$\epsilon$} solution calculated in the presence of $\delta$ additive noise is guaranteed to be \sig{$(\epsilon - 2d, \epsilon + 2d)$}.
\end{lemma}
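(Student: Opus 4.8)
The plan is to route through the recursive characterization of tradeoff approximation and then ``re-home'' that recursion from the noisy utilities onto the true utilities, absorbing the perturbation into the slack parameters at a cost of $2\delta$ per level. Let $u$ denote the true utilities and $u_\delta$ the noisy ones, so $|u_\delta(S',G)-u(S',G)|\le\delta$ for every $S'$ and $G$. Since $S$ is computed to be \tA{$\epsilon$} with respect to $u_\delta$, Theorem~\ref{thm:trade-cg-equiv} (applied with $u_\delta$ in the role of the utility function) yields a constant slack vector $\vec\alpha\in[0,\epsilon]^m$ with $S\in\sols_m^{\alpha}$, where $\sols_0^{\alpha}=\sols$ and, for $i\ge 1$, $\sols_i^{\alpha}=\{S'\in\sols_{i-1}^{\alpha}: u_\delta(S',G_{[i]})\ge\max_{S''\in\sols_{i-1}^{\alpha}}u_\delta(S'',G_{[i]})-\alpha_i\}$. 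The task then becomes: exhibit a per-solution slack $\vec\beta$ with $\beta_i:\sols\to[\epsilon-2\delta,\epsilon+2\delta]$ for which the analogous recursion run on $u$ still contains $S$ at level $m$.

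First I would build $\vec\beta$ so that the $u$-recursion reproduces \emph{exactly} the sets $\sols_i^{\alpha}$. Arguing by induction on $i$ and assuming the level-$(i-1)$ set of the $u$-recursion equals $\sols_{i-1}^{\alpha}$, the $u$-membership test $u(S',G_{[i]})\ge\max_{S''\in\sols_{i-1}^{\alpha}}u(S'',G_{[i]})-\beta_i(S')$ can be made to have the same truth value as the test defining $\sols_i^{\alpha}$ by choosing
$$\beta_i(S')=\Bigl(\max_{S''\in\sols_{i-1}^{\alpha}}u(S'',G_{[i]})-\max_{S''\in\sols_{i-1}^{\alpha}}u_\delta(S'',G_{[i]})\Bigr)+\bigl(u_\delta(S',G_{[i]})-u(S',G_{[i]})\bigr)+\alpha_i .$$
Because the noisy and true maxima over the fixed set $\sols_{i-1}^{\alpha}$ differ by at most $\delta$, and $u_\delta(S',G_{[i]})$ and $u(S',G_{[i]})$ differ by at most $\delta$, each such $\beta_i(S')$ lies in $[\alpha_i-2\delta,\alpha_i+2\delta]\subseteq[-2\delta,\epsilon+2\delta]$. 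The upper end is already acceptable; for the lower end I would replace each value by $\max\{\beta_i(S'),\epsilon-2\delta\}$ to pull it into the required window, noting that clipping a solution's slack \emph{upward} can never eject a solution that was being kept, so in particular $S$ continues to satisfy every per-level constraint it satisfied before.

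The delicate point --- and the step I expect to be the main obstacle --- is the knock-on effect of this upward clipping: enlarging $\beta_i(\cdot)$ may let some solution $S'$ that was excluded from $\sols_i^{\alpha}$ re-enter the $u$-recursion at level $i$, and a re-entering solution can raise $\max_{S''}u(S'',G_{[i+1]})$, tightening the level-$(i+1)$ constraint and threatening to drop $S$. Closing the argument requires showing that any such re-entrant $S'$ is within $\epsilon-2\delta$ in true utility of the level-$i$ maximum --- a fact that follows from its noisy exclusion together with the $\pm\delta$ bounds --- and then checking that this still leaves $S$ within $\epsilon+2\delta$ of every subsequent maximum, i.e.\ keeps $S$ in the final set of the clipped $\vec\beta$. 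This is exactly where the full width $4\delta$ of the window $[\epsilon-2\delta,\epsilon+2\delta]$ is consumed; the fragility being tamed here is the same pairwise/recursive instability illustrated by Examples~\ref{ex:sig-trade-problems} and~\ref{ex:rec-not-trade}, which is why the window cannot be collapsed to a single slack value.
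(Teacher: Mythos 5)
There is a genuine gap, and it sits exactly at the step you flag as the ``delicate point.'' After clipping every slack up to at least $\epsilon-2\delta$, a solution $S'$ that the original witness excluded at level $i$ with a small slack $\alpha_i$ (Theorem~\ref{thm:trade-cg-equiv} only guarantees $\alpha_i\in[0,\epsilon]$, possibly $0$) re-enters the level-$i$ set, and nothing controls its utilities at \emph{later} levels: it can raise $\max u(\cdot,G_{[i+1]})$ by far more than $\epsilon+2\delta$ and eject $S$. Your closing sentence asserts that $S$ stays within $\epsilon+2\delta$ of every subsequent maximum, but no argument is given, and none exists. Concretely, take two groups and two solutions with noisy utilities equal to the true ones (so the $\delta$-noise bound holds for any $\delta>0$), sorted utility vectors $(\epsilon,0.5)$ for $S$ and $(\epsilon/2,1)$ for $S'$, with $\epsilon$ small and $\delta<\epsilon/4$. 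Then $S$ is \tA{$\epsilon$} (the drop from $\epsilon$ to $\epsilon/2$ at the worst-off group certifies every tradeoff), yet in any recursion whose slacks lie in $[\epsilon-2\delta,\epsilon+2\delta]$, $S'$ survives level $1$ (since $\epsilon/2\ge\epsilon-(\epsilon-2\delta)=2\delta$) and then $S$ is eliminated at level $2$ (since $0.5<1-(\epsilon+2\delta)$), so $S$ is not \sig{$(\epsilon-2\delta,\epsilon+2\delta)$}. In other words, the statement as literally worded (with \tA{$\epsilon$} as the hypothesis) is false, and your clipping construction fails on precisely such instances. A secondary, repairable gap: you use $|u_\delta(S',G_{[i]})-u(S',G_{[i]})|\le\delta$ for the \emph{sorted} order statistics without justification; the paper proves this as a standalone claim inside its proof.

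The ``\tA{$\epsilon$}'' in the lemma is evidently a typo for ``\sig{$\epsilon$}'': the surrounding discussion and the paper's proof concern $\epsilon$-significant solutions computed on the noisy utilities, i.e.\ membership in the recursion with constant slack exactly $\epsilon$ at every level under $u_\delta$. Under that hypothesis your re-homing idea does work and essentially becomes the paper's argument: with every $\alpha_i=\epsilon$ the per-solution slack needed to reproduce the noisy sets on the true utilities automatically lands in $[\epsilon-2\delta,\epsilon+2\delta]$, so no lower clipping is needed; the paper implements this by setting $\beta_i(S)=\epsilon+2\delta$ if $S\in\sols_i^{\epsilon}$ (the noisy recursion) and $\epsilon-2\delta$ otherwise, and proving by induction that $\sols_i^{\beta}=\sols_i^{\epsilon}$ for all $i$. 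Starting instead from the weaker \tA{$\epsilon$} $=$ \rA{$\epsilon$} hypothesis cannot be rescued, because upgrading an arbitrary slack witness in $[0,\epsilon]^m$ to one with all slacks near $\epsilon$ is exactly what separates \sig{$\epsilon$} from \rA{$\epsilon$} (compare Example~\ref{ex:sig-rec}).
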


\begin{proof}
We begin by proving a property about how the sorted vector of group utilities is affected by added noise. 

\begin{claim}
For any solution $S$ and $i \in [m]$, we have $|u(S, G_{[i]}) - u_{\delta}(S, G_{[i]})| \leq \delta$.
\end{claim}

\begin{claimproof}
Suppose for purposes of contradiction that for some $S$ and $i$, $|u(S, G_{[i]}) - u_{\delta}(S, G_{[i]})| > \delta$. We can assume without loss of generality that $u(S, G_{[i]}) > u_{\delta}(S, G_{[i]})$.

Let $G_j$ and $G_{j_\delta}$ be the groups used to calculate $u(S, G_{[i]})$ and $u_{\delta}(S, G_{[i]})$, respectively. If $G_j = G_{j_\delta}$, this is a contradiction because it means the noise on group $G_j$ was more than $\delta$. 

Otherwise, in order to ensure that the noise requirements $|u(S, G_j) - u_{\delta}(S, G_j)|, |u(S, G_{j_{\delta}}) - u_{\delta}(S, G_{j_{\delta}})| \leq \delta$, we must have that $u_{\delta}(S, G_{j}) > u_{\delta}(S, G_{[i]})$ and additionally that $u(S, G_{j_{\delta}}) < u(S, G_{[i]})$, otherwise these constraints on noise cannot be true. 
Because $G_{j_{\delta}}$ is below $G_j$ in the sorted groups vector according to $u$, but $G_{j}$ is above $G_{j_{\delta}}$ in the sorted groups vector according to $u_{\delta}$, but $G_j$ and $G_{j_{\delta}}$ occupy the same index in both sorted vectors, we must be able to find some other $G_k$ such that $u(S, G_k) > u(S, G_{[i]})$ but $u_{\delta}(S, G_k) < u_{\delta}(S, G_{[i]})$. 

However, this implies that 
$$u(S, G_k) > u(S, G_{[i]}) > u_{\delta}(S, G_{[i]}) > u_{\delta}(S, G_k)$$
$$u(S, G_k) > u(S, G_{[i]}) > u_{\delta}(S, G_{[i]}) > u(S, G_k) - \delta$$

and so we must have $|u(S, G_{[i]}) - u_{\delta}(S, G_{[i]})| \leq \delta$. This contradicts our original assumption, and so we conclude that for all $i$ and $S$, $|u(S, G_{[i]}) - u_{\delta}(S, G_{[i]})| \leq \delta$.
\end{claimproof}

With this claim in hand, we can act as if noise was applied with respect to the sorted vector of group utilities rather than the groups themselves. 

We define a new amount of allowable slack $\vec{\beta}$ as follows, where $\sols_1^{\epsilon}, ..., \sols_m^{\epsilon}$ are the recursively defined sets used to calculated \sig{$\epsilon$} on the noisy utilities. 

$$\beta_i(S) = \begin{cases}\epsilon + 2\delta & S \in \sols_i^{\epsilon} \\ \epsilon - 2\delta & \text{otherwise}\end{cases}$$

We proceed by induction, noting that $\sols_0^{\epsilon} = \sols_0^{\beta} = \sols$.

Suppose that for all $j < i$, we have $\sols_j^{\epsilon} = \sols_j^{\beta}$. 

Then, if $S \in \sols_i^{\epsilon}$, we have that
\begin{align*}
    u_{\delta}(S, G_{[i]}) &\geq \max_{S' \in \sols_{i-1}^{\epsilon}}u_{\delta}(S', G_{[i]}) - \epsilon\\
    u(S, G_{[i]}) + \delta &\geq \max_{S' \in \sols_{i-1}^{\epsilon}}u(S', G_{[i]}) - \epsilon - \delta\\
    u(S, G_{[i]})  &\geq \max_{S' \in \sols_{i-1}^{\epsilon}}u(S', G_{[i]}) - \epsilon - 2\delta\\
    u(S, G_{[i]})  &\geq \max_{S' \in \sols_{i-1}^{\beta}}u(S', G_{[i]}) - \beta_i(S)\\
\end{align*}

and so $S \in \sols_i^{\beta}$ as well. On the other hand, if $S \not\in \sols_i^{\epsilon}$, then let $j$ be the smallest $j$ such that $S \not\in \sols_j^{\epsilon}$. We must have

\begin{align*}
    u_{\delta}(S, G_{[j]})  &< \max_{S' \in \sols_{j-1}^{\epsilon}}u_{\delta}(S', G_{[j]}) - \epsilon\\
\end{align*}

In the non-noisy setting, we are therefore guaranteed that
\begin{align*}
u(S, G_{[j]}) -\delta  &< \max_{S' \in \sols_{j-1}^{\epsilon}}u(S', G_{[j]}) +\delta - \epsilon\\
u(S, G_{[j]}) &< \max_{S' \in \sols_{j-1}^{\epsilon}}u(S', G_{[j]}) - \epsilon + 2\delta\\
u(S, G_{[j]}) &< \max_{S' \in \sols_{j-1}^{\epsilon}}u(S', G_{[j]}) - \beta_j(S)\\
u(S, G_{[j]}) &< \max_{S' \in \sols_{j-1}^{\beta}}u(S', G_{[j]}) - \beta_j(S)\\
\end{align*}

And therefore $S \not\in \sols_{j}^{\beta} \supseteq \sols_i^{\beta}$ as well, so we can conclude that $\sols_i^{\epsilon} = \sols_i^{\beta}$ for all $i = 1, ..., m$, and hence any $S$ that is \sig{$\epsilon$} in the noisy setting must be \sig{$(\epsilon - 2\delta, \epsilon + 2\delta)$}.
\end{proof}

Thus, we conclude that while noisy \sig{$\epsilon$} solutions are not guaranteed to be \sig{$\epsilon$} with respect to the true utilities, they will still satisfy a slightly relaxed notion of significance that allows for slack to vary within an interval of size $4\delta$ around the constant $\epsilon$ slack used in standard significance. When $\delta$ is tiny compared to $\epsilon$, this is only a tiny change in the allowed slack values.

\section{Solutions via Linear Programming}\label{sec:lp-approaches}

As discussed in Section~\ref{sec:intro-algs}, we provide efficient algorithms for a particular natural choice of cohort selection setting. In particular, we consider modeling utility as the sum of the utilities that a subgroup draws from each individual member of the selected cohort, and rather than outputting a lexicographically maximal cohort, we output a lexicographically maximal vector of marginal selection probabilities that provides leximax utility in expectation. 

\subsection{Problem Setting}
We begin by discussing our choice of utility function and randomized selection approach in more detail.

\subsubsection{Linear Utility Function}

Let $\comm$ be a set of potential committee members of size $n$. We assume that each subgroup $G_j \in \grps$ has a value for each individual committee member $c_i \in \comm$, denoted by $v_{ij} \in [0, 1]$. 

When we choose our set of solutions to be $\comm^{(k)}$, the set of all cohorts of size $k$, these values can now be combined to give a group's utility for any particular cohort as the sum of its values for the cohort members. Given a cohort $C = \{c_1, ..., c_k\} \in \comm^{(k)}$ and subgroup $G_j \in \grps$, this utility function can be written formally as 

$$u(C, G_j) = \sum_{i = 1}^kv_{ij}$$

This linear utility can easily be extended to the randomized case. Assuming $\comm$ has size $n$, any vector of individual assignment probabilities $D = \{x_1, \dots, x_n\} \in \mathcal{D} := [0, 1]^n$, called \emph{marginal (selection) probabilities}, provides an approach to randomly selecting a cohort of candidates from $\comm$ where each $c_i$ is included in the cohort with probability $x_i$, independent of the other candidates. The expected utility of a particular group $G_j$ over a distribution $D \in [0, 1]^n$ is then 
$$u(D, G_j) = \sum_{i = 1}^n x_i v_{ij}.$$
We will restrict our search to marginal distributions that output a cohort with expected size $k$ ($\sum_{i = 1}^n x_i = k$).

\subsubsection{Randomized Selection Approach}

Our algorithms output a vector of marginal selection probabilities $D = \{x_1, ..., x_n\} \in \dist := [0, 1]^n$, such that when each cohort member $c_i$ is independently included in the cohort with probability $x_i$, we get a cohort of size $k$ in expectation such that the vector of expected utilities is lexicographically maximal. This independent sampling procedure provides a simple way to randomly select a cohort.

This distributional approach to selection renders the problem tractable, while we show in Section~\ref{sec:integer-hardness} that finding deterministic leximax solutions is NP-hard. Moreover, it provides a fair way to get around the issue that in a deterministic setting, there may be multiple leximax cohorts that each favor a different subgroup. 

It's worth noting that this approach to cohort selection only gives a cohort with \emph{expected} size $k$. While such a selection procedure may be fine in situations where the desired size of the final cohort is somewhat flexible, sometimes it may be critical to get a cohort of size exactly $k$. In Section~\ref{sec:rounding}, we discuss a dependent rounding scheme that can be used to sample a cohort of size exactly $k$ with utilities that are still leximax in expectation.

In general, cohorts sampled from arbitrary leximax distributions are not guaranteed to provide groups with utility near their expected value. However, our choice of selection procedure guarantees that groups receive near-expected utility with high probability.

\begin{lemma}\label{lem:concentration}
Consider an arbitrary group $G_j$ and a lexicographically maximal vector of marginal selection probabilities $D \in \dist$ (with respect to the linear utility function defined above and with expected size $k$). 

Then, for any $\delta > 0$, we have 
$$\Pr_{C \sim D}[U(C, G_j) < U(D, G_j) - \delta] < e^{-2\delta^2/n}$$

(Where $n := |\comm|$ is the number of potential cohort members.)
\end{lemma}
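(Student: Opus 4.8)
The plan is to apply a Hoeffding/Azuma-style concentration bound to the random variable $U(C, G_j)$, exploiting the fact that the sampling procedure includes each candidate $c_i$ independently. Write $U(C, G_j) = \sum_{i=1}^n X_i v_{ij}$, where $X_i$ is the indicator that $c_i \in C$; these indicators are mutually independent by our choice of selection procedure, and $\mathbb{E}[U(C, G_j)] = \sum_i x_i v_{ij} = U(D, G_j)$. Each term $X_i v_{ij}$ lies in the interval $[0, v_{ij}] \subseteq [0,1]$, so $U(C, G_j)$ is a sum of $n$ independent bounded random variables each ranging over an interval of length $v_{ij} \le 1$.

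The key step is to invoke Hoeffding's inequality for sums of independent bounded variables: if $Y = \sum_i Y_i$ with $Y_i \in [a_i, b_i]$ independent, then $\Pr[Y \le \mathbb{E}[Y] - \delta] \le \exp\!\left(-2\delta^2 / \sum_i (b_i - a_i)^2\right)$. Here $b_i - a_i = v_{ij} \le 1$, so $\sum_i (b_i - a_i)^2 = \sum_i v_{ij}^2 \le \sum_i 1 = n$. Plugging in, $\Pr[U(C,G_j) < U(D,G_j) - \delta] \le \exp(-2\delta^2 / \sum_i v_{ij}^2) \le \exp(-2\delta^2/n)$, which is exactly the claimed bound. (The strict inequality in the event versus the $\le$ in Hoeffding's statement only makes the bound easier, since $\Pr[Y < t] \le \Pr[Y \le t]$.)

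One subtlety worth spelling out: the lemma is stated for a lexicographically maximal $D$, but the concentration argument uses nothing about leximaxity — it holds for any $D \in \dist$ whatsoever. So I would simply note that the bound follows purely from the independence of the coordinate-wise sampling and the linearity of the utility, and that the leximax hypothesis is inherited only because $D$ happens to be the object our algorithm outputs. I do not anticipate a genuine obstacle here; the only thing to be slightly careful about is confirming that the crude bound $\sum_i v_{ij}^2 \le n$ is what gives the clean $n$ in the exponent (a tighter statement with $\sum_i v_{ij}^2$, or even $\sum_i x_i(1-x_i)v_{ij}^2$ via Bernstein, is available but not needed). The proof is therefore essentially a one-line application of Hoeffding once the setup is in place.
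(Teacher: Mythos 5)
Your proof is correct and essentially identical to the paper's: the paper likewise defines independent random variables $X_i$ taking value $v_{ij}$ when $c_i$ is selected, notes they are bounded in $[0,1]$, and applies the additive Chernoff--Hoeffding bound to obtain $e^{-2\delta^2/n}$. Your added remarks (that leximaxity is irrelevant to the bound, and that a tighter exponent via $\sum_i v_{ij}^2$ is available) are accurate but not needed.
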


\begin{proof}
Define random variables $X_1, ..., X_n$ such that $X_i$ is $v_{ij}$ if $c_i$ is included in the cohort, and zero otherwise. 

According to our random selection procedure, these are all independent variables with each $X_i$ taking on the value $v_{ij}$ with probability $x_i$.

Because $v_{ij} \in [0, 1]$ for all $i \in [n]$ by definition, this is the sum of $n$ independent random variables with values bounded between 0 and 1. 

Let $\mu:= \mathbb{E}[\sum_{i = 1}^nX_i]$. Applying an additive Chernoff bound~\cite{dubhashi_panconesi_2009} gives an upper bound on the probability that the sum of $X_i$s falls far below $\mu$:

$$\Pr[\sum_{i = 1}^nX_i < \mu - \delta] < e^{-2\delta^2/n}$$

For any $\delta > 0$. Thus, because $u(C, G_j) \sim \sum_{i = 1}^nX_i$, and $u(D, G_j) = \mathbb{E}[\sum_{i = 1}^nX_i]$, we get the statement of the lemma.
\end{proof}

To contextualize this result, consider some group $G_j$ that is expected to get about half of their maximum possible utility for a leximax solution when $k = 50$ and $n = 100$. Because the values for each individual are defined to be between 0 and 1, this means that $G_j$ has an expected utility of 25. Then, Lemma~\ref{lem:concentration} implies that they are guaranteed to get at least half their expected utility more than 95\% of the time. In comparison, an arbitrary leximax distribution can potentially only guarantee that $G_j$ gets more than half their expected utility with probability 1/3. These concentration guarantees also hold for cohorts of size exactly $k$ outputted by our suggested rounding approach. More details can be found in Section~\ref{sec:rounding}.

Having explained and justified our choice of utility function as well as randomized selection approach, we now present our algorithms that calculate exact and approximate leximax solutions in this setting.

\subsection{Leximax distribution over committee members}
To find a marginal distribution over each potential committee member in $\comm$, we break up the problem into multiple, recursively-defined subproblems to uncover the ranking of subgroup utilities in the leximax optimal solution as well as their optimal values. 

Balan et. al \cite{balan2008} approach this problem by reducing the domain of solutions in each level of optimization. They choose the $(i+1)$-th subgroup to be the subgroup that least-constrains the domain of potential leximax solutions. Overall, their approach finds a leximax-optimal marginal distribution over potential committee members that requires $O(|\grps|)$ calls to a linear program at each of the $|\grps|$ iterations, giving us $O(|\grps|^2)$ total calls. However, this approach of limiting the domain of the possible solutions requires fixing an order of worst off groups in every iteration. When approximate notions of leximax are introduced, there can be multiple possible orderings of groups to consider.

We suggest finding the leximax distribution over individuals as a series of linear programs with a linear number of variables and a number of constraints that increases from linear to exponential as the series progresses. In the first LP, we are finding the maxmin utility $\gamma_1$ using the values $v_{ij}$ that each group has for individual cohort candidates: 
\begin{align*}
    \begin{array}{ll}
        \mbox{maximize}_{x, \gamma_1}  &  \gamma_1 \\
        \mbox{subject to} & \sum_{i=1}^{n}x_i = k \\
                        & 0 \le x_i \le 1 \\
                        & \sum_{i=1}^n x_i v_{ij} \ge \gamma_1 \quad j = 1, \dots, m 
    \end{array}
\end{align*}

Once the optimal lower bound for the worst off group, $\gamma_1^*$ is found, is found, we can then maximize the utility of the second-worse-off-group. Ogryczak et al. \cite{ogryczak2005telecommunications} observed that maximizing the $\mathbf{\gamma}=(\gamma_1, \dots, \gamma_m)$ vector is equivalent to maximizing for the cumulative sum of $\gamma_i$'s from $i=1, \dots, m$. Thus, to find the leximax distribution of individuals, we optimize a series of $m$ linear programs using the cumulative leximax values as a constraint. The $m$-th last LP will be as follows:
\begin{align*}
    \begin{array}{ll}
        \mbox{maximize}_{x, \gamma_m}  &  \gamma_m\\
        \mbox{subject to} & \sum_{i=1}^{n}x_i = k \\
                        & 0 \le x_i \le 1 \\
                        & \sum_{i=1}^n \sum_{G_j \in S} v_{ij}x_i \ge \sum_{s=1}^l \gamma_s^* \quad \forall l = 1, \dots, m, \forall S \subseteq \mathcal{G} \ s.t. \ |S| = l
    \end{array}
\end{align*}
Since we must ensure that the sum of utilities is above the minimum utility for all subgroups, the last constraint requires that the sum of utilities over all sized-$l$ subsets of groups be greater than the sum of the $l$ optimal $\gamma^*$-s (i.e. $\sum_{i=1}^l \gamma_i^*$) from previous iterations. This creates $\binom{m}{l}$ constraints for the $l$-th LP. Algorithm \ref{alg:lp_cand} describes the iterative process of finding a leximax distribution where in each successive problem we add additional constraints on the minimum value of the sum of utilities. In our setting of linear utilities, we can solve each linear program in polynomial time with the ellipsoid method using a polynomial-time separation oracle.

\begin{algorithm}
\caption{\textsc{leximaxCandidates} Finding the leximax distribution over candidates}%
\KwIn{$v \in \R_{\ge0}^{n\times m}$ values of each group for each candidate.}
\KwOut{$\{x_1, \dots, x_n \}$ leximax distribution over candidates.}
Constraints = \{${\sum_{i=1}^n x_i v_{ij} \ge \gamma_1 \  j = 1, \dots, m};\ 0 \le x_i \le 1;\ \sum_{i=1}^{n}x_i = k \}$\; 
$\gamma_1^* \gets \max_{x, \gamma_1} \gamma_1$ s.t. Constraints \; 
\For{$l \in 2, \dots ,m$}{
Constraints = Constraints $\cup \{\sum_{i=1}^n \sum_{G_j \in S} v_{ij}x_i \ge \sum_{s=1}^l \gamma_s^* \ \forall S \subseteq \mathcal{G} \ s.t. \ |S| = l$ \}\;
$\gamma_i^* \gets \max_{x, \gamma_i} \gamma_i$ s.t. Constraints given  $\gamma_1^* , \dots, \gamma_{i-1}^*$(previously computed)\; 
}
\label{alg:lp_cand}
\end{algorithm}

\begin{lemma}
For $n$ candidates and $m$ groups, the running time of Algorithm \ref{alg:lp_cand} is polynomial in $n$ and $m$.   
\label{lem:poly-lp}
\end{lemma}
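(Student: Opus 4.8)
The plan is to show that each of the $m$ linear programs solved inside Algorithm~\ref{alg:lp_cand} can be handled in time polynomial in $n$ and $m$, which immediately yields a polynomial overall running time since there are exactly $m$ of them. The one real obstacle is the number of constraints: the $l$-th LP carries, for every $s \le l$ and every subset $S \subseteq \grps$ with $|S| = s$, a constraint $\sum_{i=1}^n \sum_{G_j \in S} v_{ij} x_i \ge \sum_{t=1}^s \gamma_t^*$, so the constraint count grows like $\sum_{s \le l}\binom{m}{s}$, which is exponential in $m$. Generic LP solvers are therefore inadequate, and instead I would run the ellipsoid method equipped with a polynomial-time separation oracle, invoking the Gr\"otschel--Lov\'asz--Schrijver equivalence between separation and optimization.

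First I would record the easy structural facts. Each LP has only $n+1$ variables (the marginals $x_1,\dots,x_n$ and the scalar objective $\gamma_i$), its feasible region is contained in the bounded box $[0,1]^{n+1}$ (recall every $v_{ij} \in [0,1]$), and it is clearly feasible whenever $\sum_i x_i = k$ is achievable. The thresholds $\gamma_1^*,\dots,\gamma_{i-1}^*$ appearing in the right-hand sides are themselves optimal values of earlier LPs in the sequence, hence rationals whose bit-length stays polynomially bounded across the $m$ rounds by standard linear-programming bit-complexity estimates; this is exactly the input data the ellipsoid method needs to run in polynomial time.

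The heart of the argument is the separation oracle (Algorithm~\ref{alg:sep-oracle}). Given a candidate point $(x,\gamma_i)$, the oracle first computes each group's value $u(D,G_j) = \sum_{i=1}^n x_i v_{ij}$ in $O(nm)$ time and sorts the groups by this value. The key observation is that, for each fixed size $s$, the size-$s$ subset $S$ minimizing $\sum_{G_j \in S} u(D,G_j)$ is precisely the set of the $s$ worst-off groups, so the constraint for \emph{all} size-$s$ sets is satisfied if and only if it is satisfied for this single worst-off set. Hence the oracle only needs to check $m$ inequalities --- one per $s = 1,\dots,l$ --- each obtained by comparing a running prefix sum over the sorted utilities against $\sum_{t=1}^s \gamma_t^*$; if one is violated it returns the corresponding worst-off set as the separating constraint, and otherwise it certifies feasibility. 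This runs in $O(nm + m\log m)$ time.

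Putting the pieces together: a polynomial-time separation oracle together with the boundedness and polynomial bit-complexity bounds above implies, via separation--optimization equivalence, that each LP in the sequence is solvable in time polynomial in $n$ and $m$, and summing over the $m$ iterations gives the claim. I expect the main subtlety to be the correctness of the separation oracle --- specifically the claim that checking only the $s$ worst-off groups is equivalent to checking all $\binom{m}{s}$ subsets --- together with the bookkeeping needed to confirm that the bit-length of the accumulated $\gamma^*$ values remains polynomial over all $m$ rounds.
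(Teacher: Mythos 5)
Your proposal is correct and follows essentially the same route as the paper's proof: solve the $m$ LPs via the ellipsoid method, using a polynomial-time separation oracle whose key observation is that for each $l$ it suffices to check the prefix sum of the $l$ smallest group utilities (after sorting) against $\sum_{s=1}^{l}\gamma_s^*$, since every other size-$l$ subset has at least that cumulative utility. The only differences are cosmetic bookkeeping --- you invoke separation--optimization equivalence and bit-length bounds where the paper instead exhibits an explicit initial feasible point and cites an ellipsoid iteration bound --- so there is nothing substantive to change.
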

\begin{proof}
We run $m$ LPs in total. For each LP, the running time is the number of steps the ellipsoid algorithm takes multiplied by the time per iteration. 
For an efficient implementation, the ellipsoid algorithm needs (1) a feasible initial solution and (2) a polynomial-time separation oracle.

(1) For an initially feasible solution, set $\gamma_1 = 0$ and all $x_i = k/n$ for the first LP. It is easy to check that this gives a feasible solution. In the $m$-th LP use the $x$ values of the solution to the previous LP and $\gamma_m =0$ as the initial solution. This solution is feasible as all but the last constraint are identical to the previous LP and, thus, the $x$-values of the previous solution fulfill them. For the last constraint, note that the right side of the inequality equals the next-to-last constraint. As all utility values are non-negative, summing over a larger set $G$ on the left side only increases the value of the left side in comparison to the value of the next-to-last constraint. Thus, the last constraint is fulfilled as well for $\gamma_m = 0$.

(2) Given a vector of $x$-values and a vector of minimum utilities $\gamma_i$ the goal of a separation oracle is to decide whether these values fulfill the LP and, if they do not, find a constraint that is violated by them.
The time of the separation oracle  dominates the running time per iteration of the ellipsoid algorithm. Thus, it suffices to give a polynomial-time separation oracle. 
We present our separation oracle in Algorithm \ref{alg:sep-oracle}.
It first checks whether all $x$-values fall into the correct range and add up to $k$.
Then it computes the utility $y_j$ of each subgroup $G_j$ and sorts them in non-decreasing order of $y$-value.
Instead of checking all $\binom{m}{l}$ constraints for each set of $l$ subgroups, it
uses the following observation:
it suffices to check that, for each $l$, the sum of the utilities of the $l$ groups with \emph{smallest} utilities is at least $\sum_{s=1}^l \gamma_s$. The reason is that every other set of $l$  subgroups must have cumulative utility at least as large. If, however, the set of $l$ subgroups with minimum utility does not have high enough cumulative utility, then a violating constraint has been found.

Summing up utilities across $n$ candidates takes $O(n)$ time, sorting the resulting utility vector $y$ takes $O(m \log m)$ time. In total, this separation oracle checks if all the constraints are satisfied in $O(m \log m + n)$ time.
\begin{algorithm}
\caption{\textsc{Separation Oracle} Checking if a constraint has been violated by a given solution $x$ and $\gamma$}%
\KwIn{$v \in \R^{n \times m}_{\ge0}$, values of each group for each candidate, $\{x_1 , \dots , x_n \}$ candidate solution, $\{\gamma_1 , \dots , \gamma_l \}$ minimum utilities for the $l$-th LP}
\KwOut{\{TRUE or a violated constraint\}}
$S \gets 0$\; 
\For{$i = 1, \dots, n$}{
    \If{$x_i > 1$ or $x_i < 0$}{
    \Return $\{0 \le x_i \le 1\}$} 
    $S \gets S + x_i$ ;\
}
\If{$S \neq k$}{
    \Return $\{\sum_{i=1}^n x_i \le k\}$} 
    
$ y_j \gets \sum_{i=1}^n v_{ij} x_i \quad \forall j=1 , \dots,  m $\;  
$ \tilde{y} \gets \textsc{SORT}(y)$\; 
$U_{min} \gets 0 $\; 
\For{$l = 1 , \dots, m$}{
$U_{min} \gets U_{min} +  \tilde{y}_l$\;
\If{$U_{min} < \sum_{s=1}^l \gamma_s$}{
    \Return FALSE as this constraint does not hold: $\{\sum_{i=1}^n \sum_{G_j \in S} v_{ij}x_i \ge \sum_{s=1}^l \gamma_s \quad \forall S \subseteq \mathcal{G} \ s.t. \ |S| = l$\}} 
}
\Return TRUE
\label{alg:sep-oracle}
\end{algorithm}

For the ellipsoid method, we are guaranteed convergence in $k$ steps where $k \le 2 n^2 \log(\frac{R}{r})$ where $R$ is the initial radius and $r$ is the final radius of  the feasible region \cite{bland1981ellipsoid}. For our feasibility region, $R$ is exponential with respect to the input size (i.e. $O(2^n)$) which means $\log(\frac{R}{r})$ is linear with respect to $n$.  Since the separation oracle and centroid method at each step runs in polynomial time and there are at most $\tilde{O}(n^2)$ steps,  Algorithm \ref{alg:lp_cand} also runs in polynomial time. 
\end{proof}

\subsubsection{Approximate Leximax distribution over candidates}
When finding approximate leximax distributions over candidates, the approach of Balan et. al. \cite{balan2008} can no longer be applied since choosing the subgroup that least constrains the domain of potential solutions may yield multiple subgroups when the leximax objective is approximate. Thus, there is no single ordering of worst-off-groups to rely on when considering group utility. %
However, we can easily modify our recursive linear program (Algorithm \ref{alg:lp_cand}) to find an 
an \rA{$\epsilon$} solution (Definition \ref{def:cg-recursive-approx}) for a given `slack' vector $\vec{\alpha} = (\alpha_1, \dots, \alpha_m)$.
While the first LP is the same as the exact case, we can loosen the constraints in the m-th LP as follows: 
\begin{align*}
    \begin{array}{ll}
        \mbox{maximize}_{x, \gamma_m}  &  \gamma_m\\
        \mbox{subject to} & \sum_{i=1}^{n}x_i = k \\
                        & 0 \le x_i \le 1 \\
                        & \sum_{i=1}^n \sum_{G_j \in S} v_{ij}x_i > \sum_{s=1}^l (\gamma_s^* - \alpha_s) \ \forall S \subseteq \mathcal{G} \ s.t. \ |S| = l, \ l = 1, \dots, m-1 \\
                        & \sum_{i=1}^n \sum_{G_j \in \mathcal{G}} v_{ij}x_i > \sum_{s=1}^{m-1} (\gamma_s^* - \alpha_s) + \gamma_m 
    \end{array}
\end{align*}

For a \sig{$\epsilon$} approximate solution, we can set all the $\alpha_i$'s equal to $\epsilon$ and apply algorithm \ref{alg:lp_cand} with modified constraints as described above.

\subsection{Rounding Distributions Over Candidates}
\label{sec:rounding}
Once we obtain a distribution over cohort candidates from Algorithm \ref{alg:lp_cand}, we can sample each individual $i$ with probability $x_i$ independently. The total size of the committee follows a Poisson Binomial distribution which will be size-$k$ in expectation where $k = \sum_{i=1}^n x_i$ according to our constraints.

If a committee of size $k$ is a hard constraint, we can instead take a rounding approach similar to previous work in cohort selection \cite{bairaktari2021fair}. For finite samples in our cohort selection setting, we can employ a dependent rounding scheme that guarantees that the utilities for each subgroup is leximax in expectation while the size of the cohort is exactly $k$~\cite{srinivasan2001distributions}.

The rounding scheme described in~\cite{srinivasan2001distributions} results in a distribution over cohorts of size exactly $k$ such that the marginal inclusion probability for each potential cohort member is still satisfied, giving us the leximin utility values in expectation when the utility function is linear over cohort members. The scheme has the added benefit that the events corresponding to the inclusion/exclusion of each cohort member are negatively correlated. Because Chernoff bounds such as the one used in our proof of Lemma~\ref{lem:concentration} have been shown to also hold in settings where random variables are not independent but are negatively correlated (See~\cite{doerrconcentrationbook}, Theorem 1.10.24), our concentration guarantees also apply to solutions outputted by the rounding scheme.

\subsection{Integer Solution}\label{sec:integer-hardness}
Although the focus of this work is providing distributions over candidates and cohorts, we also touch briefly on the problem of finding integer leximax cohorts. An exact integer leximax solution removes the randomness inherent in rounding from a distributional solution. However, we show such an integer solution is NP-hard to find. Moreover, the weaker maxmin version of the problem (see below) is NP-hard to compute. 

Given a set of candidates $\comm = \{c_1, \dots, c_n\}$, a set of groups $\grps  = \{G_1, \dots, G_m\}$, and the values $v \in \R^{n \times m}_{\ge 0}$ of each group for each candidate such that the utility of a group for a cohort is its average value over the cohort's candidates, the integer leximax cohort selection problem is: 
\begin{align*}
    \begin{array}{ll}
        \mbox{maximize}  &  \gamma_1, \dots, \gamma_m\\
        \mbox{subject to} & \sum_{i=1}^{n}x_i = k \\
                        &  x_i \in \{0, 1\} \\
                        & \sum_{i=1}^n \sum_{G_j \in G} v_{ij}x_i \ge \sum_{s=1}^l \gamma_s \quad \forall G \subseteq \mathcal{G} \ s.t. \ |G| = l, l = 1, \dots, m 
    \end{array}
\end{align*}
The simpler integer \emph{maxmin cohort  selection problem with cardinality $k$} determines  a set of candidates defined by $x_i$'s such that the minimum utility  of any group is maximized: 
\begin{align*}
    \begin{array}{ll}
        \mbox{maximize}  &  \gamma\\
        \mbox{subject to} & \sum_{i=1}^{n}x_i = k \\
                        &  x_i \in \{0, 1\} \\
                        & \sum_{i=1}^n x_i v_{ij} \ge \gamma  \quad \forall j = 1, \dots, m
    \end{array}
\end{align*}

Next we show the hardness of the maximin cohort selection problem and even of the following \emph{integer $\epsilon$-approximate maxmin cohort selection problem with cardinality $k$}, where $0 \le \epsilon$ is a constant: Determine  a set of candidates defined by $x_i$'s such that the minimum utility of any group is within an additive error of $\epsilon$ of $\gamma$, the maximum minimum utility possible.

\begin{lemma}
For $\epsilon < 0.5$
the integer $\epsilon$-approximate maxmin cohort selection problem is NP-hard. It is also NP-hard to determine the number of groups with non-minimum utility to within a factor of $(e-1)/e + o(1)$.
\label{lem:hardness}
\end{lemma}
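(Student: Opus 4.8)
The plan is to use two reductions from set-cover-type problems. For the first claim, I would reduce from \emph{Minimum Hitting Set}: given a universe $[n]$, sets $T_1,\dots,T_m\subseteq[n]$, and a target $k<n$, decide whether some $H\subseteq[n]$ with $|H|\le k$ meets every $T_j$. Construct a cohort instance with one candidate $c_i$ per universe element and one group $G_j$ per set, putting $v_{ij}$ equal to a fixed positive value (a suitable scaling, e.g.\ $1/2$) when $e_i\in T_j$ and $0$ otherwise. Then for a size-$k$ cohort $C$ the utility of $G_j$ is that scaling times $|C\cap T_j|$, so the attainable worst-group utilities lie in a discrete set $\{0,\tfrac12,1,\dots\}$, and this minimum is bounded away from $0$ exactly when $C$ meets every $T_j$; since adding arbitrary extra elements keeps a hitting set a hitting set, a hitting set of size $\le k$ exists iff the optimal integer maxmin value $\gamma^\star$ is at least $\tfrac12$, and otherwise $\gamma^\star=0$. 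An $\epsilon$-approximate maxmin solver with $\epsilon<\tfrac12$ therefore returns, on a YES instance, a cohort of minimum utility $\ge\tfrac12-\epsilon>0$, hence (by discreteness) a genuine hitting set, and on a NO instance a cohort of minimum utility $0$; inspecting the returned cohort decides Hitting Set in polynomial time.

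For the second claim I would reduce from \emph{Max $k$-Coverage}, invoking Feige's theorem that it is NP-hard to distinguish instances in which $k$ sets cover the whole universe from instances in which no $k$ sets cover more than a $(1-1/e+\delta)$ fraction, for every constant $\delta>0$. Map each set to a candidate and each universe element to a group, with $v_{ij}$ a fixed positive value when element $e_j$ lies in set $A_i$ and $0$ otherwise, and \emph{additionally} add one dummy group $G_0$ with $v_{i0}=0$ for all $i$. The dummy pins the worst-group value of every cohort (in particular of every leximax cohort) to $0$, so ``non-minimum utility'' literally means ``positive utility'', i.e.\ the corresponding element is covered; hence for any cohort the number of non-minimum-utility groups equals the number of elements it covers. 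Moreover, converting an uncovered group into a covered one strictly improves the sorted utility vector (it turns a leading $0$ into a positive entry), so an integer leximax cohort maximizes coverage, and since the sorted leximax utility vector is unique, the number of its non-minimum-utility groups is exactly the Max $k$-Coverage optimum. Consequently a polynomial-time algorithm approximating that count within any constant factor $\beta>1-1/e$ would, on Feige's instances, report $\ge\beta m'$ in the ``covers everything'' case and at most $(1-1/e+\delta)m'$ otherwise (where $m'$ is the number of real groups and $\delta$ is chosen with $\beta>1-1/e+\delta$), separating the two cases and contradicting $\mathrm P\neq\mathrm{NP}$; thus approximation within $(1-1/e)+o(1)$ is NP-hard.

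The routine part is setting up the two membership-indicator instances; the delicate points are (i) making the ``exactly $k$'' cardinality constraint interact correctly with padding and choosing the scaling of the $v_{ij}$ so the YES/NO gap in the achievable minimum utility genuinely exceeds $2\epsilon$ for every $\epsilon<0.5$, and (ii) in the second reduction, arguing rigorously that the leximax objective's ``number of non-minimum-utility groups'' coincides with the Max $k$-Coverage optimum---this is precisely where the dummy group and the uniqueness of the sorted leximax utility vector are needed---and checking that Feige's YES/NO gap transfers to the claimed $(1-1/e)+o(1)$ factor. I expect point (ii) to be where the actual work lies.
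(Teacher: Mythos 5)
Your proposal is correct and follows essentially the same route as the paper: a reduction from Minimum Hitting Set with 0/1 (here scaled) values for the $\epsilon$-approximate maxmin claim, and Feige's inapproximability of maximum coverage for the counting claim. Your dummy zero-utility group and the observation that leximax minimizes the number of zero entries make the second part slightly more careful than the paper's terse derivation (which implicitly equates ``non-minimum utility'' with ``covered''), but the underlying argument is the same.
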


\begin{proof}
We will show that finding a solution to the maxmin cohort selection problem solves the NP-hard problem of Minimum Hitting Set. Thus, the maximin cohort selection problem is also NP-hard. 

The Minimum Hitting Set Problem is defined as follows. Let $C$ be a collection of subsets of a finite set $S$. A hitting set for $C$ is a subset $S' \subseteq S$ such that $S'$ contains at least one element from each subset in $C$.

Given a hitting set problem
we construct  a maxmin cohort selection problem as follows. Given $s_i$ and $C_j$, we set $v_{ij}$ such that $v_{ij} = 1$ if $s_i \in C_j$ and 0 otherwise. Then for each $k$ starting with $k=1$ to $k$, we solve the   $\epsilon$-approximate integer leximax cohort problem with cardinality $k$.
As $1-\epsilon > \epsilon$ for $\epsilon < 0.5$ a return value 
of at least $1 - \epsilon$ implies that the minimum utility is at least 1, while a return value of at most $\epsilon$ implies that the minimum utility is 0.
Thus, a hitting set of size $k$ exists iff the $\epsilon$-approximate integer maxmin cohort selection problem for a cohort of size $k$ returns a value of at least $1 - \epsilon$.

Hence, the smallest value of $k$ such that 
the return value for the problem with cardinality $k$ is at least $1-\epsilon$ gives us the size of the minimum hitting set. The set of indices $i$ such that $x_i = 1$ gives the elements of the hitting set.
 As the reduction used at most $k$ calls to the maxmin cohort selection problem to solve the  Minimum Hitting Set Problem, the $\epsilon$-approximate integer maxmin cohort selection problem must also be NP-hard. 

Furthermore, given a parameter $k$ which limits the size of the hitting set, the \emph{maximum coverage version} of the problem asks for the maximum number of sets covered by a hitting set of size $k$. 
It is NP-hard to approximate this number  within $(e-1)/e + o(1)$~\cite{feige1998threshold}. It follows that it is NP-hard to approximate within this factor how many groups can have non-minimum utility if at most $k$ candidates are selected.

\end{proof}

\section{Discussion and Future Work}
Motivated by the problem of selecting representative cohorts, we turned to a lexicographically maximal definition of optimal representation. We investigated existing approximations of leximax fairness and introduced new definitions which consider semantic notions of noise and tradeoffs. In settings where utilities or objectives are roughly estimated and leximax fairness is desirable, the approximate notions of leximax in this paper may be useful as alternatives to exact leximax.  

While we gave a polynomial time algorithm which computes a leximax distribution over a pool of candidates that is effective for both exact and approximate notions of leximax, finding an algorithm for approximation notions of leximax that is more efficient than exact algorithms remains an open problem. Furthermore, our setting of linear utilities is a natural assumption but can be extended to sub-modular or other classes of utility functions. 

In another direction, our approximation notions all reason about allowing for additive amounts of error. However, considering what notions, especially those in line with \sig{$\epsilon$}, might arise from multiplicative error could be a useful direction to explore.

Finally, we only considered how the presence of additive noise might affect our definitions, but other models of noise specific to different domains may also be considered. Noise can appear not just based on entire cohorts or distributions but also for candidates individually. Modeling how noise from individual candidates accumulate over over cohorts and distributions of candidates will vary depending on the utility function but is a promising direction to explore.

\bibliography{lipics-v2021-sample-article}

\appendix

\end{document}